\documentclass[11pt]{article}

\usepackage{indentfirst}
\usepackage{amsmath}
\usepackage{amsthm}
\usepackage[dvipsnames]{xcolor}
\usepackage{eufrak}
\usepackage{verbatim}
\usepackage[makeroom]{cancel}
\usepackage{rotating}
\usepackage{float}
\usepackage{fullpage}
\usepackage{amssymb}
\usepackage{mathrsfs}
\usepackage{epsfig}
\usepackage{graphicx}
\usepackage{framed}
\usepackage[numbers]{natbib}
\usepackage{color}
\usepackage{hyperref}
\usepackage{bbm}
\usepackage{cleveref}
\usepackage{mathtools}
\usepackage{./shortcuts}
\usepackage{./shortcuts_alphabet}

\usepackage{nth}
\usepackage{amsmath}
\usepackage{microtype}
\usepackage{datetime}
\usepackage{wrapfig}
\usepackage{caption}
\usepackage{ifpdf}
\usepackage[all,arc,curve,color,frame]{xy}
\usepackage{color}
\usepackage{hyperref}
\usepackage{pgf}
\usepackage{tikz}

\usepackage[labelfont=bf]{caption}

\definecolor{corlinks}{RGB}{0,0,150}
\definecolor{cormenu}{RGB}{0,0,150}
\definecolor{corurl}{RGB}{0,0,150}

\hypersetup{
	colorlinks=true,
	urlcolor=corlinks,
	linkcolor=corlinks,
	menucolor=cormenu,
	citecolor=corlinks,
	pdfborder= 0 0 0
}

\newtheorem{theorem}{Theorem}

\newtheorem{lemma}[theorem]{Lemma}
\newtheorem{corollary}[theorem]{Corollary}
\newtheorem{definition}[theorem]{Definition}
\newtheorem{proposition}[theorem]{Proposition}

\DeclareMathOperator{\poly}{poly}
\DeclareMathOperator{\polylog}{polylog}

\DeclareMathOperator*\Exp{{\bf E}}
\DeclareMathOperator*\Prob{{\bf Pr}}
\newcommand{\bool}{\left\{0,1\right\}}

\newcommand{\bnote}[1]{\noindent \textcolor{BrickRed}{(\textbf{Bruno:} #1\noindent)}}

\newcommand{\znote}[1]{\noindent \textcolor{blue}{(\textbf{Zhenjian:} #1\noindent)}}
\renewcommand{\bnote}[1]{}
\renewcommand{\znote}[1]{}

\def\colorful{1}

\ifnum\colorful=1

\fi
\ifnum\colorful=0

\fi

\usepackage{algorithm}
\usepackage[noend]{algpseudocode}

\usepackage{titlesec}

\setcounter{secnumdepth}{4}

\titleformat{\paragraph}
{\normalfont\normalsize\bfseries}{\theparagraph}{1em}{}
\titlespacing*{\paragraph}
{0pt}{3.25ex plus 1ex minus .2ex}{1.5ex plus .2ex}

\begin{document}
	
	\title{
		Algorithms and Lower Bounds for Comparator Circuits \\from Shrinkage
	}
	
	\author{
		Bruno P. Cavalar\footnote{Email: \texttt{Bruno.Pasqualotto-Cavalar@warwick.ac.uk}}\vspace{0.2cm}\\{\small Department of Computer Science}\\{\small University of Warwick\vspace{0.3cm}}
		\and	
		Zhenjian Lu\footnote{Email: \texttt{Zhen.J.Lu@warwick.ac.uk}}\vspace{0.2cm}\\{\small Department of Computer Science}\\
		{\small University of Warwick} 
	}
	
	\date{}  
	
	\maketitle
	
	\vspace{-0.45cm}
	
	\begin{abstract}
		Comparator circuits are a natural circuit model for studying bounded fan-out computation whose
		power sits between nondeterministic branching programs and general circuits. Despite having been studied for nearly three decades, the first
		superlinear lower bound against comparator circuits was proved only recently by Gál and Robere (ITCS 2020), who established a $\Omega\!\left((n/\log n)^{1.5}\right)$ lower bound on the size of comparator circuits computing an explicit function of $n$ bits. 
		
		In this paper, we initiate the study of \emph{average-case complexity} and \emph{circuit analysis algorithms} for comparator circuits. Departing from previous approaches, we exploit the technique of shrinkage under random restrictions to obtain a variety of new results for this model. Among them, we show
		\begin{itemize}
			\vspace{0.2cm}
			\item \textbf{Average-case Lower Bounds.} 
			For every $k = k(n)$ with $k \geq \log n$, there exists a polynomial-time computable function $f_k$ on $n$ bits such that, for every comparator circuit $C$ with at most
			$n^{1.5}/O\!\left(k\cdot \sqrt{\log n}\right)$ gates, we have
			\[
				\Prob_{x\in\bool^n}\left[C(x)=f_k(x)\right]\leq \frac{1}{2} + \frac{1}{2^{\Omega(k)}}.  
			\]
			This average-case lower bound matches the worst-case lower bound of G{\'{a}}l and Robere by letting $k=O\!\left(\log n\right)$.
			\vspace{0.2cm}
			\item \textbf{$\#$SAT Algorithms.} There is an algorithm that counts the number of satisfying assignments of a given comparator circuit with at most $n^{1.5}/O\!\left(k\cdot \sqrt{\log n}\right)$ gates, in time $2^{n-k}\cdot\poly(n)$, for any $k\leq n/4$. The running time is non-trivial	(i.e., $2^n/n^{\omega(1)}$) when $k=\omega(\log n)$.
			\vspace{0.2cm}
			\item \textbf{Pseudorandom Generators and ${\sf MCSP}$ Lower Bounds.} There is a pseudorandom generator of seed length $s^{2/3+o(1)}$ that fools comparator circuits with $s$ gates. Also, using this PRG, we obtain an $n^{1.5-o(1)}$ lower bound for ${\sf MCSP}$	against comparator circuits.
		\end{itemize}
	\end{abstract}

	\newpage
	\setcounter{tocdepth}{2}
	
	\tableofcontents
	
	\section{Introduction}
	
	A comparator circuit is a Boolean circuit whose gates are
	\emph{comparator gates}, each of which maps a pair of inputs
	$(x,y)$ to $(x \land y,\, x \lor y)$, and whose inputs are labelled
	by a literal (i.e., a variable $x_i$ or its negation $\neg x_i$).
	A convenient way of representing a comparator circuit
	is seen in Figure~\ref{fig:f0}.
	We draw a set of
	horizontal lines,
	each of which is called a \emph{wire} and is labelled by an input literal.
	The gates are represented as a sequence of vertical
	arrows, each of which connects some wire to another. The
	tip of the arrow is the logical disjunction gate~($\lor$),
	and the rear of the arrow is the logical conjunction gate~($\land$). 
	One of the wires is selected 
	to represent the Boolean value
	of the computation.
	The \emph{size} of the circuit is the number of gates in the circuit.
	
	\begin{figure}[h]
        \begin{center}
            \includegraphics{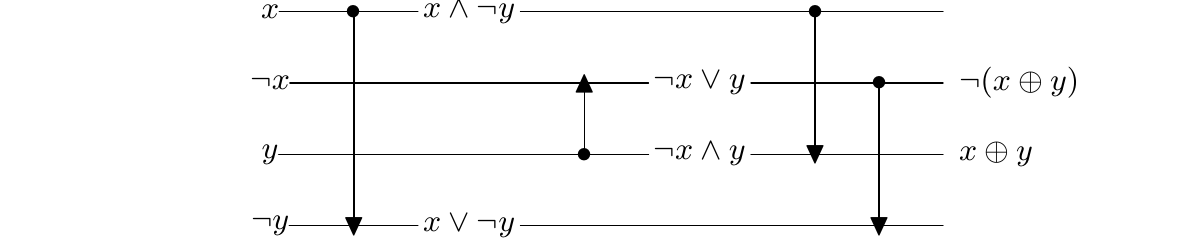}
        \end{center}
		\caption{A comparator circuit with 2 inputs, 4 wires and 4 gates. The third
			wire computes the parity of the two bits. 
		}
		\label{fig:f0}
	\end{figure}
	
	Comparator circuits can be viewed as a restricted type of circuit in
	which the gates have bounded fan-out.
	It is
	easy to see that comparator circuits can efficiently simulate Boolean
    formulas over $\{\land,\lor,\lnot\}$ with no
    overhead\footnote{As a comparison, note that there are linear-size comparator circuits
    for $\parity$~(see \Cref{fig:f0}), whereas any Boolean formula computing $\parity$ has size
    $\Omega(n^2)$~\cite{khrapchenko_1971}.}.
	Moreover, it is also
	known that polynomial-size comparator circuits can even simulate
	nondeterministic branching programs
	\cite{DBLP:journals/jcss/MayrS92}
	with a polynomial overhead only. 
    On the other hand, comparator
    circuits appear to be much stronger than formulas\footnote{Recall that
    the class of polynomial-size formulas is exactly ${\sf NC}^1$.}, 
    as it is conjectured that polynomial-size comparator circuits are
    incomparable to ${\sf NC}$~\cite{DBLP:journals/jcss/MayrS92}.
    Evidence for this conjecture is that polynomial-size comparator
    circuits can compute problems whose known algorithms are inherently
    sequential, such as stable marriage and lexicographically first maximal
    matching
	\cite{DBLP:journals/jcss/MayrS92}, and there is an oracle separation
	between ${\sf NC}$ and 
	polynomial-size comparator circuits
	\cite{DBLP:journals/toct/CookFL14}. 
	Moreover, Robere, Pitassi, Rossman and Cook~\cite{DBLP:conf/focs/RoberePRC16}
	showed that there exists a Boolean function in $\mNC^2$ 
	not computed by 
    polynomial-size
	\emph{monotone} comparator circuits\footnote{
		Comparator circuits are monotone if they don't have negated
		literals.}. 
	For these reasons, comparator circuits are likely to be incomparable to
	$\NC$,
	and polynomial-size span programs,
	which are contained in $\NC^2$, are not expected to be stronger
	than polynomial-size comparator circuits.
	
	Despite the importance of comparator circuits,
	we don't know much about them.
	Though it is easy to see that $\parity$ can be computed by comparator
	circuits with $O(n)$ wires and gates (See Figure~\ref{fig:f0}),
	the best known comparator circuit for $\maj$ uses $O(n)$ wires and
	$O(n \log n)$ gates~\cite{aks_sorting}. 
	We don't know if there is a linear-size comparator circuit for
	$\maj$\footnote{As opposed to a sorting network,
	note that a comparator circuit can use multiple copies of the same input literal.},
	whereas, for the weaker model of nondeterministic branching programs,
	superlinear lower bounds are known~\cite{Raz90b}.
    Structural questions about comparator circuits have also received some
    attention in recent
    years~\cite{DBLP:conf/innovations/GoosKRS19,DBLP:journals/iandc/KomarathSS18}.

	The first
	superlinear
	\emph{worst-case lower bound} for comparator
	circuits
	was recently obtained by
	G{\'{a}}l and Robere \cite{DBLP:conf/innovations/GalR20},
	by an adaptation of Ne\v{c}iporuk's argument~\cite{nechiporuk_1966}.
	Their proof yields a lower bound of
	$\Omega\!\left((n/\log n)^{1.5}\right)$
	to comparator circuits computing a function of $n$ bits.
	For \emph{monotone} comparator circuits,
	exponential lower bounds are known~\cite{DBLP:conf/focs/RoberePRC16}.
	
	In this paper, we exploit structural properties of small-size comparator
	circuits in order to prove the first \emph{average-case lower bounds} 
	and design the first \emph{circuit analysis algorithms} for small comparator circuits.
    Developing circuit analysis algorithms is a crucial step for
    understanding a given circuit class~\cite{oliveira_algorithms_2013, williams_algorithms_2014}, 
    and are often obtained only after
    lower bounds have been proven for the class\footnote{One exception is
        $\ACC$ circuits, for which satisfiability algorithms are
        known~\cite{Wil14b}, and the only exponential lower bound known for
        $\ACC$ is a consequence of this algorithm. However, the function
        used in the lower bound is not in $\NP$.}.
    Many well-studied circuit classes have been investigated
    from 
    this perspective,
    such as~$\AC^0$
	circuits~\cite{impagliazzo_satisfiability_2012}, 
    De Morgan
    formulas~\cite{DBLP:journals/eccc/Tal15},
	branching programs~\cite{DBLP:journals/jacm/ImpagliazzoMZ19},
    $\ACC$ circuits~\cite{Wil14b},
    and many others
    (see also~\cite{DBLP:journals/cc/ChenKKSZ15,DBLP:conf/innovations/ServedioT17,
        DBLP:conf/coco/KabanetsKLMO20}).
    Our paper
    commences 
    an investigation of this kind 
    for comparator circuits.
	
	\subsection{Results}
	
	\noindent\textbf{Average-case lower bounds.}
    Our work starts with the first
    average-case lower bound against comparator
	circuits.
	
	\begin{theorem}[Average-case Lower Bound]\label{thm:lower_bound-intro}
		There exist constants $c,d\geq 1$ such that the following holds.
		For any $k\geq c\cdot \log n$,
		there is a polynomial-time computable function $f_k$ such that, for
		every comparator circuit $C$ with at most
		\[
		\frac{n^{1.5}}{d\cdot k\cdot \sqrt{\log n}}
		\]
		gates,	we have
		\[
		\Prob_{x\in\bool^n}[f_k(x)=C(x)]\leq \frac{1}{2} + \frac{1}{2^{\Omega(k)}}.
		\]
	\end{theorem}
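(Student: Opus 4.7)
The plan is to adapt the classical shrinkage-plus-Andreev template to comparator circuits. The backbone would be a shrinkage lemma asserting that a $p$-random restriction sends any comparator circuit of size $s$ to a comparator circuit of size $\tilde O(s\cdot p^{1/2})$ on the surviving variables, with high probability. The square-root exponent is chosen to match the $n^{1.5}$ worst-case bound of G\'al--Robere: combined with $p=\Theta(k/n)$ and $s=n^{1.5}/(dk\sqrt{\log n})$, the restricted size drops to $\tilde O(n/\sqrt{k\log n})$, which should sit below the correlation-bound threshold for a suitable hard function on $k$ bits.

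For the hard function, I would take a generalised Andreev-style construction. Split the input $(x,y)$ so that $y$ encodes the description of a function $g_y\colon \{0,1\}^k\to\{0,1\}$ from a polynomial-time-uniform family, and $x$ holds the evaluation data. Compress $x$ via a restriction-robust gadget (block-wise XOR when $2^k \le n/2$, or an explicit extractor/hard-function gadget for larger $k$) into $z\in\{0,1\}^k$, and set $f_k(x,y)=g_y(z)$. The family $\{g_y\}$ should be chosen so that (i) $f_k$ is polynomial-time computable and (ii) for a uniformly random $y$, every comparator circuit of size $\tilde o(n/\sqrt{k\log n})$ has correlation at most $1/2+2^{-\Omega(k)}$ with $g_y$. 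When $k\le \log n-O(1)$, taking $y$ to be the literal truth table of $g_y$ and invoking a counting argument suffices; for larger $k$, $g_y$ would be instantiated via an explicit hard function inherited from G\'al--Robere's construction.

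Assembling the pieces, suppose for contradiction that a comparator circuit $C$ of size $s=n^{1.5}/(dk\sqrt{\log n})$ agrees with $f_k$ on a $\tfrac{1}{2}+2^{-\Omega(k)}$ fraction of inputs. Apply a $p$-random restriction with $p=\Theta(k/n)$ to the $x$-coordinates. The shrinkage lemma, together with standard concentration, guarantees that with high probability over the restriction $C|_\rho$ is a comparator circuit of size $\tilde O(n/\sqrt{k\log n})$. A Yao-style averaging---fixing the best $y$ and the best restriction---then extracts a comparator circuit of this small size that correlates with $g_y$ on the surviving $\Theta(k)$ bits by at least $\tfrac{1}{2}+2^{-\Omega(k)}$, contradicting the base correlation bound on $g_y$. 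The restriction-robust compression gadget is precisely what ensures the reduction to $g_y$ on the surviving bits survives the restriction.

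The main obstacle is the shrinkage lemma. De Morgan formula shrinkage leverages the tree structure and an inductive decomposition into subformulas; comparator circuits are DAGs with bounded fan-out but no parse tree, so the classical induction does not transfer. The most promising line of attack is to analyse each comparator gate in topological order, bound the probability that the gate degenerates (one of its two incoming wires is forced to a constant that trivialises the $\wedge$ or $\vee$), exploit the bounded fan-out to control dependencies between these events, and apply a moment- or martingale-based concentration inequality to upgrade the expected count of surviving non-trivial gates into a high-probability bound. A secondary concern is that the polylogarithmic overhead from the shrinkage step, together with the cost of converting in-expectation to high-probability statements, must be absorbed into the $\sqrt{\log n}$ factor of the target bound; this is likely to be the quantitatively delicate part of the argument.
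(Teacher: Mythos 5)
The backbone of your plan---a lemma asserting that a $p$-random restriction shrinks the number of \emph{gates} of a comparator circuit to $\tilde{O}(s\cdot p^{1/2})$ with high probability---is precisely the step that is not available, and the paper's proof is organized around avoiding it. When a variable is fixed, a gate becomes useless only once one of its two incoming wires is constant at that point, and the TYPE-2 removal of \Cref{removing} just rewires the problem further down the wire; in the worst case one can only guarantee roughly one gate removed per eliminated wire, so there is no analogue of the De Morgan parse-tree induction, and the paper explicitly leaves any direct gate-shrinkage lemma (even with exponent $\Gamma\ge 1/2$) as an open problem. Your topological-order/martingale sketch does not engage with this one-gate-per-wire obstruction, so the key lemma remains a conjecture rather than a proof step. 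The paper instead shrinks the number of \emph{wires}, and does so deterministically: it partitions the $n$ coordinates into $n/k$ parts, each containing exactly one variable from each XOR block of the Andreev-type function, and by averaging finds a part $S_i$ such that at most $\ell k/n$ wires are labelled by variables of $S_i$. Then for \emph{every} assignment to the variables outside $S_i$ (no randomness, no concentration) the restricted circuit has at most $\ell k/n \le \frac{1}{d}\sqrt{n/\log n}$ wires, and the G\'al--Robere structural theorem (\Cref{lem:structural}: an equivalent circuit with at most $\ell_0(\ell_0-1)/2$ gates) bounds the description length of the restricted function by $O(\ell_0^2\log n)\le n/4$ bits; combined with fixing an $x$ of Kolmogorov complexity at least $n/2$ and the list-decodability of the code, this yields the $\tfrac12+2^{-\Omega(k)}$ bound.

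Two further gaps. First, for $k$ above $\log n-O(1)$ you propose to take the inner function $g_y$ from G\'al--Robere's explicit hard functions, but those are only \emph{worst-case} hard; you need that every comparator circuit of the post-restriction size has correlation at most $\tfrac12+2^{-\Omega(k)}$ with $g_y$, which such functions are not known to satisfy. The paper obtains this base-case hardness for all $k\in[O(\log n),\Omega(n)]$ from a $\bigl(1/2-O(n/2^{k/2}),\,O(2^{k/2}/n)\bigr)$-list-decodable code applied to $x$, via an incompressibility (or counting) argument, so no separate ``hard inner function'' is needed. Second, even granting your shrinkage lemma, the $\tilde{O}(\cdot)$ overhead and the restriction's failure probability must be absorbed into the single $\sqrt{\log n}$ factor; since the point of the theorem is to match the worst-case bound already at $k=O(\log n)$, any extra polylogarithmic loss would miss the stated size $n^{1.5}/(d k\sqrt{\log n})$. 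The paper's block-wise restriction shrinks the wires with probability one and uses only a block-XOR extractor, which is exactly how it avoids these losses.
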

	
    An important feature of the lower bound in \Cref{thm:lower_bound-intro}
    is that it matches the $\Omega\!\left((n/\log n)^{1.5}\right)$ worst-case
    lower bound of \cite{DBLP:conf/innovations/GalR20}, in the sense that we
    can recover their result (up to a multiplicative constant) by setting
    $k=O(\log n)$.

	Using ideas from the proof of the above average-case lower bound, we also
	show average-case lower bounds against various models that tightly
	match their state-of-the-art worst-case lower bounds, such as general
	formulas, (deterministic-, nondeterministic-, parity-)branching
	programs and span programs (see \Cref{sec:generalization}). 
	Note that strong average-case
	lower bounds against $n^{2-o(1)}$-size general formulas and
	deterministic branching programs were previously known
    \cite{DBLP:conf/stoc/KomargodskiR13,DBLP:journals/cc/ChenKKSZ15} but
    they did not match the worst-case lower bounds, 
    whereas tight average-case lower bounds for \emph{De Morgan} formulas
    were proved by~\cite{DBLP:journals/siamcomp/KomargodskiRT17}.
	
	\vspace{0.2cm}
	\noindent\textbf{$\#$SAT algorithms.} 
	The design of
	algorithms for interesting \emph{circuit analysis problems}
	is 
	a growing 
    line of research in circuit
    complexity~\cite{williams_algorithms_2014}. These are
	problems that take circuits as inputs. A famous example of such a
	circuit analysis problem is the \emph{satisfiability} problem (SAT),
	which asks to determine whether a given circuit has a satisfying
	assignment. Note that the satisfiability problem for polynomial-size
	general circuits is NP-complete, so it is not believed to have a
	polynomial-time (or subexponential-time) algorithm. However, one can
	still ask whether we can obtain \emph{non-trivial} SAT algorithms
	running faster than exhaustive search, say in time $2^n/n^{\omega(1)}$
	where $n$ is the number of variables of the input circuit, even for
	restricted circuit classes.
	While designing non-trivial SAT algorithms is an interesting problem by
	itself, it turns out that this task is also tightly connected to
	proving lower bounds. In particular, recent works of
	Williams~\cite{Wil13, Wil14b} 
	have shown
	that such a non-trivial
	satisfiability algorithm for a given class of circuits can often be
	used to prove non-trivial circuit lower bounds against that same circuit class.
	
	Here, we show an algorithm with non-trivial running time that counts
	the number of satisfying assignments of a given comparator circuit.
	
	\begin{theorem}[$\#$SAT Algorithms]\label{thm:main_sat-intro}
		There is a constant $d>1$ and a deterministic algorithm such that,
		for every $k\leq n/4$, given a comparator circuit on $n$ variables with at most
		\[
		\frac{n^{1.5}}{d\cdot k\cdot \sqrt{\log n}}
		\]
		gates, the algorithm outputs the number of satisfying assignments of $C$ in time
		\[
		2^{n - k}\cdot\poly(n).
		\]
	\end{theorem}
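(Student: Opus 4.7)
The plan is to follow the shrinkage-based $\#\text{SAT}$ paradigm, using the same shrinkage-under-random-restrictions lemma for comparator circuits that drives the average-case lower bound of \Cref{thm:lower_bound-intro}: under a random restriction that keeps each variable alive with probability $p$, a size-$s$ comparator circuit shrinks in expectation to size $O(s \cdot p^{3/2})$, matching the $1.5$ exponent from G\'{a}l--Robere.

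The algorithm partitions $[n]$ into an ``outer'' block $T$ of size $n-k$ and an ``inner'' block $S$ of size $k$, and exploits the identity
\[
\#\text{SAT}(C) \;=\; \sum_{\alpha \in \{0,1\}^{T}} \#\text{SAT}(C|_\alpha).
\]
For each of the $2^{n-k}$ assignments $\alpha$, the residual comparator circuit $C|_\alpha$ on $S$ is obtained in $\poly(n)$ time by constant propagation, and $\#\text{SAT}(C|_\alpha)$ is computed by brute force on the (at most $O(s(C|_\alpha))$) variables that the residual actually depends on, in $2^{O(s(C|_\alpha))}\cdot\poly(n)$ time. Plugging in the size bound $s \leq n^{1.5}/(d\,k\sqrt{\log n})$ and instantiating shrinkage with surviving probability $p = k/n$, the expected residual size is $O\!\left(s\cdot(k/n)^{3/2}\right) = O\!\left(\sqrt{k/\log n}\right)$, yielding $\poly(n)$ amortized cost per $\alpha$ for ``typical'' restrictions.

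The main obstacle is controlling the tail of the residual-size distribution: since the brute-force $\#\text{SAT}$ cost is exponential in $s(C|_\alpha)$, rare $\alpha$ with abnormally large residuals could dominate the runtime even though the shrinkage lemma only bounds the expectation. I plan to address this in one of two ways: (i) by deriving a high-probability version of the shrinkage lemma via a higher-moment analysis of the shrinkage progress measure, in the spirit of the concentration bounds known for De Morgan formula shrinkage; or (ii) by augmenting the enumeration with a Ne\v{c}iporuk-style structural bound on the number of distinct residual subfunctions, in the spirit of the G\'{a}l--Robere worst-case argument, which lets us amortize over the bad restrictions. The threshold $k \leq n/4$ is chosen so that the combined argument covers the whole parameter range, degenerating smoothly to brute force as $k$ shrinks to $\log n$ and giving the non-trivial $2^{n}/n^{\omega(1)}$ bound whenever $k=\omega(\log n)$.
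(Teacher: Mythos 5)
There are genuine gaps, and the approach diverges from the paper's in ways that matter. First, your starting point --- an expected gate-shrinkage lemma of the form $s \mapsto O(s\cdot p^{3/2})$ for comparator circuits --- is not what drives \Cref{thm:lower_bound-intro} and is not proved anywhere in the paper; in fact, proving any direct shrinkage of the \emph{number of gates} under random restrictions (even with exponent $\Gamma>1/2$) is stated as an open problem. The paper never randomly restricts at all in the $\#$SAT algorithm: it partitions the variables into $\lfloor n/k\rfloor$ consecutive blocks, counts for each block the number of \emph{wires} labelled by its variables, and by averaging finds one block $S_i$ with at most $\ell_0 \leq \frac{1}{d}\sqrt{n/\log n}$ such wires. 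The crucial payoff is that this bound holds for \emph{every} one of the $2^{n-k}$ assignments to the variables outside $S_i$, so there is no ``typical vs.\ bad restriction'' issue to control --- your plan (i)/(ii) for taming the tail is solving a problem the paper's deterministic averaging avoids entirely, and even in your ``typical'' case the per-restriction cost $2^{O(\sqrt{k/\log n})}$ exceeds the $\poly(n)$ budget once $k$ is larger than $\mathrm{polylog}(n)$, so the claimed $2^{n-k}\cdot\poly(n)$ bound would not follow for the full range $k\le n/4$.

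Second, you are missing the central algorithmic difficulty that the paper spends \Cref{DS} on. After fixing the variables outside the chosen block, constant propagation does \emph{not} leave you with a circuit whose number of gates is small: fixing a wire only lets you remove gates one at a time via the useless-gate rewiring of \Cref{removing}, so the residual circuit may still have $\poly(n)$ gates even though it has only $\ell_0$ wires. \Cref{lem:structural} guarantees an \emph{equivalent} circuit with at most $\ell_0(\ell_0-1)/2$ gates, but finding it (or its satisfying-assignment count) efficiently is exactly the problem; brute-forcing over the $k$ free variables per restriction would give a trivial $2^n$ total. The paper resolves this by precomputing, once, a tree-like data structure over sequences of useful gates on $\ell_0$ wires and $k$ variables (cost $2^{k}\cdot \ell_0^{O(\ell_0^2)} \le 2^{k+n/2}$), whose leaves store either the type of a useless gate (enabling efficient simplification) or a memorized $\#$SAT count; each of the $2^{n-k}$ restricted circuits is then processed in $\poly(n)$ time by walking this tree. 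Your proposal has no counterpart to this memorization/simplification step, and without it (or the wire-counting averaging that makes every restriction small) the runtime claim does not go through.
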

	
	Note that the running time in \Cref{thm:main_sat-intro} is non-trivial
	for 
	size up to $o\!\left(n/\log n\right)^{1.5}$, in which case $k=\omega(\log n)$
	and the running time becomes $2^n/n^{\omega(1)}$.
	
	\vspace{0.2cm}
	
	\noindent\textbf{Pseudorandom generators and ${\sf MCSP}$ lower bounds.}
	Another important circuit analysis problem is \emph{derandomization},
	which, roughly speaking, asks to decide whether a given circuit accepts
	or rejects a large fraction of its inputs. A standard approach to solve
	this problem is to construct a pseudorandom generator (PRG). A PRG
	against a class $\mathcal{C}$ of circuits is an efficient and
	deterministic procedure $G$ mapping short binary strings (seeds) to
	longer binary strings,
	with the property that
	$G$'s output (over uniformly random
	seeds) ``looks random'' to every circuit in $\mathcal{C}$. More
	precisely, we say that a generator $G\colon\bool^r\to\bool^n$
	\emph{$\varepsilon$-fools} a class $\mathcal{C}$ of circuits if, for
	every $C\colon\bool^n\to\bool$ from $\mathcal{C}$, 
	we have
	\[
	\left| \Prob_{z\in\bool^r}[C(G(z))=1] - \Prob_{x\in\bool^n}[C(x)=1]  \right|\leq \varepsilon,
	\]
	In constructing PRGs, we aim to minimize the parameter $r$, which is called the seed length.
	
	We show a PRG against comparator circuits of size $s$ with seed length $s^{2/3+o(1)}$.
	
	\begin{theorem}[Pseudorandom Generators]\label{prg-intro}
		For every $n \in \mathbb{N}$, $s=n^{\Omega(1)}$, and $\varepsilon \geq 1/\poly(n)$, there is a pseudorandom generator $G\colon \bool^{r}\to\bool^{n}$, with seed length
		\[
		r=s^{2/3 + o(1)},
		\]
		that $\varepsilon$-fools comparator circuits on $n$ variables with $s$ gates.
	\end{theorem}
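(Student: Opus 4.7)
The plan is to build the PRG via the shrinkage-based paradigm, leveraging the shrinkage theorem for comparator circuits that underlies the average-case lower bound of \Cref{thm:lower_bound-intro} and the $\#$SAT algorithm of \Cref{thm:main_sat-intro}. The construction follows the familiar recipe: apply a pseudorandom restriction that drives the circuit to a trivial residual, then assign the surviving free variables pseudorandomly.

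First, I would record the shrinkage statement implicit in the preceding sections: for a random restriction $\rho$ that independently keeps each of the $n$ variables alive with probability $p$, the expected size of $C|_\rho$ is $\tilde O(p^{3/2} \cdot s)$, where $3/2$ is the shrinkage exponent for comparator circuits. Taking $p = s^{-2/3+o(1)}$ drives the expected post-restriction size down to $\polylog(n)$, so that with high probability $C|_\rho$ depends on only $\polylog(n)$ of the surviving variables.

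Second, I would derandomize the restriction by drawing $\rho$ from a bounded-independence distribution (for instance, via a $k$-wise independent construction or a small-bias/hash-based restriction in the style of Trevisan--Xue), while verifying that the shrinkage analysis used for \Cref{thm:lower_bound-intro} carries through under this limited-independence replacement. After such a pseudorandom restriction, the residual circuit is small enough to be fooled by a standard PRG of $\polylog(n)$ seed (e.g., via small-bias or $k$-wise independent distributions), which then assigns values to the surviving variables.

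Finally, I would balance the parameters: combining the seed cost of the pseudorandom restriction with that of the final PRG, and, if needed, iterating pseudorandom restrictions in the style of Impagliazzo--Meka--Zuckerman so that size decreases geometrically rather than all at once, yields the claimed total seed length $s^{2/3+o(1)}$. The exponent $2/3$ is dictated by the shrinkage exponent $3/2$ via the balance $p^{3/2} s \asymp 1$. The main obstacle I foresee is establishing shrinkage in a form sufficiently robust for derandomization: shrinkage in expectation alone does not suffice, since the pseudorandom restriction must preserve shrinkage over all but a tiny fraction of seeds. I expect to need either a concentration bound on $\text{size}(C|_\rho)$ (via higher-moment estimates on the shrinkage quantity used in \Cref{thm:lower_bound-intro}) or a gradual IMZ-style iterated construction, each round of which only requires a mild shrinkage guarantee that can be proven under $k$-wise independence with polylogarithmic $k$.
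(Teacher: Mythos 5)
Your high-level recipe (pseudorandom restriction with bounded independence, high-probability rather than expected shrinkage, IMZ-style combination, balance at $p\approx s^{-2/3}$) matches the paper's framework, but the load-bearing step is wrong: you assume that the \emph{number of gates} of a comparator circuit shrinks under random restrictions with exponent $3/2$, i.e.\ $\E[\mathrm{size}(C|_\rho)]=\tilde O(p^{3/2}s)$. No such gate-shrinkage result is known, and the paper explicitly flags this as the obstacle: fixing a variable may kill only one gate per wire removed, so there is no multiplicative shrinkage of the gate count at all; the paper even lists proving any direct gate-shrinkage with exponent $\Gamma>1/2$ as an open problem, since it would already yield lower bounds beyond $n^{1.5}$ (your exponent $3/2$ would give roughly $n^{2.5}$, far past the state of the art). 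Consequently your Step 1 has no proof, and the downstream claim that $C|_\rho$ has $\polylog(n)$ size and depends on $\polylog(n)$ variables is also false under the actual parameters.

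What the paper does instead, and what your argument is missing, is to shrink the number of \emph{wires}, not gates: under a $k$-wise independent $p$-regular selection with $p=\ell^{-2/3}$ (where $\ell$ is the number of wires, $\ell\le s$), the restricted function has at most $2^{O(\sqrt{\log\ell})}\cdot p\ell=\ell^{1/3+o(1)}$ wires except with probability $\ell^{-c}$ (their Lemma~\ref{lemma:pseudorandom_shrinkage_wires}, proved via heavy/light variables and bounded-independence Chernoff bounds). The conversion from wires back to a useful complexity measure is the G\'al--Robere structural theorem (Lemma~\ref{lem:structural}): a circuit with $\ell'$ wires is equivalent to one with at most $\ell'(\ell'-1)/2$ gates, so the residual class contains at most $2^{\ell^{2/3+o(1)}}$ distinct functions, which is exactly the quantity the IMZ/CKLM machinery needs to fool with seed length $\ell^{2/3+o(1)}$; note the residual circuits are of size about $\ell^{2/3}$, not polylogarithmic. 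The exponent $2/3$ thus comes from balancing $1/p$ against $(p\ell)^2$, not from a gate-shrinkage exponent of $3/2$. To repair your proposal you would either have to prove the gate-shrinkage statement you assumed (open, and stronger than the paper's results) or replace it with the wire-shrinkage plus structural-theorem route.
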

	
	Note that the seed length of the PRG in \Cref{prg-intro} is non-trivial (i.e., $o(n)$) for comparator circuits of size $n^{1.5-o(1)}$.
	
	The PRG above has an application in obtaining lower bounds for the \emph{minimum circuit size
		problem} (${\sf MCSP}$) against comparator circuits. 
        The ${\sf MCSP}$ problem asks if a given truth table\footnote{A truth
		table is a bit-string that stores the output values of a Boolean function for
		all possible inputs.} represents a function that can be computed by some
	small-size circuit. Understanding the exact complexity of ${\sf MCSP}$ is a
	fundamental problem in complexity theory. Motivated by a recent line of
	research called \emph{hardness
        magnification}~\cite{DBLP:journals/eccc/OliveiraS18a,DBLP:conf/coco/OliveiraPS19,DBLP:conf/focs/ChenJW19,chen_beyond_2020},
        which states that a weak
	circuit lower bound for certain variants of ${\sf MCSP}$ implies
	breakthrough results in circuit complexity, researchers have been
	interested in showing lower bounds for ${\sf MCSP}$ against restricted
	classes of circuits. For many restricted circuit classes such as
	constant-depth circuits, formulas and branching programs, the lower bounds
	that have been proved for ${\sf MCSP}$ essentially match the best known
	lower bounds that we have for any explicit functions
    \cite{DBLP:conf/icalp/GolovnevIIKKT19,DBLP:journals/toct/CheraghchiKLM20,DBLP:conf/coco/KabanetsKLMO20}.
    Here we obtain ${\sf MCSP}$ lower bounds against comparator circuits
    that nearly match the worst-case lower bounds.
	
	\begin{theorem}[${\sf MCSP}$ Lower Bounds]\label{thm:MCSP-intro}
        Let ${\sf MCSP}[n^{\alpha}]$ denote the problem of deciding whether
        a given $n$-bit truth table represents a function that can be
        computed by some general circuit of size at most $n^{\alpha}$.
        For any $\varepsilon>0$ and any $0<\alpha \leq 1-\varepsilon$,
        the ${\sf MCSP}[n^{\alpha}]$ problem
        does not have comparator circuits with $n^{1+\alpha/2 - \varepsilon}$ gates.
	\end{theorem}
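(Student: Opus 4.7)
The plan is to combine the PRG of \Cref{prg-intro} with the standard ``natural proofs''-style reduction between PRGs and MCSP lower bounds, in the spirit of \cite{DBLP:journals/toct/CheraghchiKLM20} and related works. Assume toward contradiction that ${\sf MCSP}[n^\alpha]$ is decided by a comparator circuit $C$ on $n$-bit truth tables with at most $s := n^{1+\alpha/2-\varepsilon}$ gates, and let $G \colon \bool^r \to \bool^n$ be the PRG from \Cref{prg-intro} with seed length $r = s^{2/3+o(1)}$ that $(1/10)$-fools $C$.

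Two opposing estimates will then yield a contradiction. Since $\alpha < 1$, at most $2^{O(n^\alpha \log n)} \leq 2^{o(n)}$ of the $2^n$ truth tables represent Boolean functions of general circuit complexity at most $n^\alpha$, so any correct circuit for ${\sf MCSP}[n^\alpha]$ must satisfy $\Pr_{x\in\bool^n}[C(x)=1] \leq 2^{-\Omega(n)}$. In the opposite direction, we aim to show that every PRG output $G(z)$ is a YES instance: the $\log n$-bit function $f_z \colon i \mapsto G(z)_i$ has general circuit complexity at most $n^\alpha$, so $C(G(z))=1$ for every seed $z$ and $\Pr_z[C(G(z))=1]=1$. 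These two estimates jointly violate the $(1/10)$-fooling guarantee.

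The main technical step is therefore to bound the \emph{per-bit} complexity of the PRG, i.e., to verify that $G(z)_i$ is computable from $(z,i)$ by a general circuit of size at most $n^\alpha$. Opening the construction of \Cref{prg-intro}, which composes pseudorandom restrictions with a base generator for the shrunken comparator circuit, computing $G(z)_i$ amounts to (i) tracing $i$ through each restriction via a low-complexity hash family to decide whether $i$ is fixed (and to what value) or survives, and (ii) reading off the appropriate output of the base PRG when $i$ survives. Each step depends only on a short fragment of the seed and is realized by a small circuit; summing across the restrictions, the per-bit circuit size should come in at most $n^\alpha$ in the parameter regime $s = n^{1+\alpha/2-\varepsilon}$.

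The main obstacle is making this per-bit accounting tight. The naive ``per-bit complexity is at most the full seed length'' estimate gives $\poly(r) = \poly(s^{2/3+o(1)}) = n^{(2/3)(1+\alpha/2-\varepsilon)+o(1)}$, which already exceeds $n^\alpha$ when $\alpha < 1-\varepsilon$, so we cannot afford to consult the entire seed to produce one output bit. A sharper, local analysis is needed: exploit that each output bit of a pseudorandom-restriction-based PRG only depends on the handful of restriction-seed bits that directly govern position $i$'s fate together with the corresponding bit of the base PRG, never on the whole seed. Carefully quantifying this local dependency and aligning the resulting circuit size with the $n^\alpha$ budget is what pins the exponent at $1+\alpha/2-\varepsilon$ throughout the full range $0<\alpha\le 1-\varepsilon$.
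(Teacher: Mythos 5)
There is a real gap, and you have in fact put your finger on it yourself without resolving it. Your whole argument hinges on the ``main technical step'' that each output bit $G(z)_i$ of the PRG for size-$s$ comparator circuits ($s=n^{1+\alpha/2-\varepsilon}$) is computable by a circuit of size at most $n^\alpha$, i.e.\ locality well below the seed length $s^{2/3+o(1)}=n^{2/3+\alpha/3-\Omega(\varepsilon)}$. The hoped-for ``sharper local analysis'' does not go through for this construction: in the restriction-based PRG of \Cref{prg}, an output bit at a position that \emph{survives} the pseudorandom restriction is a bit of the inner generator that must fool all functions computable by the shrunken circuits, and by \Cref{lem:structural} there are $2^{\ell^{2/3+o(1)}}$ such functions, so the inner seed has length about $\ell^{2/3}$ and the surviving bit genuinely depends on essentially all of it --- not on ``a handful of restriction-seed bits plus one bit of the base PRG''. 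That is exactly why \Cref{prg} states per-bit complexity $\ell^{2/3+o(1)}$, matching the seed length. Moreover, demanding a generator all of whose outputs have per-bit complexity at most $n^\alpha$ while fooling the hypothesized MCSP comparator circuit of size $n^{1+\alpha/2-\varepsilon}$ is essentially a restatement of \Cref{thm:MCSP-intro} itself, so it cannot be obtained ``generically'' from \Cref{prg-intro}; your proof as written assumes the crux.

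The paper closes this gap by \emph{not} applying the PRG to the full circuit. It first uses the Ne\v{c}iporuk-style averaging step: partition the $n$ truth-table positions into blocks of size $k:=n^{\alpha+\varepsilon/2}$; some block $S_i$ touches at most $\ell:=n^{1+\alpha/2-\varepsilon}/\lfloor n/k\rfloor=n^{1.5\alpha-\varepsilon/2}$ wires, and one fixes all positions outside $S_i$ to $0$. The local PRG of \Cref{prg} is then applied only to the \emph{restricted} comparator circuit with $\ell$ wires, so its seed length and per-bit complexity are $\ell^{2/3+o(1)}=n^{\alpha-\Omega(\varepsilon)}$, comfortably below the $n^\alpha$ budget: every truth table $\rho\circ G(z)$ has circuit complexity at most $\polylog(n)+\ell^{2/3+o(1)}\leq n^\alpha$ and is a YES instance, while a uniformly random filling of the $k$ free positions has complexity greater than $n^\alpha$ with probability at least $1/2$, so the restricted MCSP function accepts at most half of its inputs. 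This violates the $(1/3)$-fooling guarantee for the restricted circuit. If you want to salvage your write-up, you should insert this restriction step (and the accompanying counting argument on the block of size $k$) rather than trying to sharpen the locality of the global PRG.
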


    Previously, non-trivial comparator circuit lower bounds
    were known only for functions satisfying Ne\v{c}iporuk's
    criterion~\cite{nechiporuk_1966,DBLP:conf/innovations/GalR20}, such as
    \emph{Element Distinctness} and \emph{Indirect Storage Access}.
    \Cref{thm:MCSP-intro}
    provides yet another natural computational problem which
    is hard for bounded-size comparator circuits.
    We remark that the $\MCSP$ problem is expected to require much larger
    circuits than the lower bound of \Cref{thm:MCSP-intro} provides;
    however, the lack of combinatorial, algebraic or
    analytic structure in the $\MCSP$ function means that proving lower
    bounds for it is usually hard.

	Finally, we also observe that the framework developed in
	\cite{DBLP:conf/innovations/ServedioT17} can be used to obtain a
	\emph{non-trivial} (distribution-independent) PAC learning algorithm
	for comparator circuits of size $n^{1.5-o(1)}$, that uses membership
	queries (see \Cref{sec:learning}).
	
	\subsection{Techniques}\label{sec:techniques}

	Random restrictions
	have been very fruitful
	in the study of weaker circuit classes, such 
    as~$\AC^0$
	circuits~\cite{haastad_almost_1986, impagliazzo_satisfiability_2012}, 
	De Morgan formulas~\cite{DBLP:conf/stoc/KomargodskiR13}
	and branching programs~\cite{DBLP:journals/jacm/ImpagliazzoMZ19},
	both for the proof of lower bounds
	and the construction of algorithms~\cite{DBLP:journals/cc/ChenKKSZ15}.
	However, as observed by Gál and
	Robere~\cite{DBLP:conf/innovations/GalR20}, 
	there are technical challenges when trying to apply this approach
	to comparator circuits.
	In this work, we successfully apply the method of random restrictions to
	comparator circuits for the first time.
	
	\vspace{0.2cm}
	
	\noindent\textbf{Average-case lower bounds.}
	At a high level, the proof of our average-case lower bound is based on
	the approach developed in
	\cite{DBLP:conf/stoc/KomargodskiR13,DBLP:journals/cc/ChenKKSZ15}, which
	can be used to obtain average-case-lower bounds against circuits that
	admit a property called ``shrinkage with high probability under random
	restrictions''. Roughly speaking, this property says that, if we
	randomly fix the values of some variables in the circuit except for a
	$0<p<1$ fraction of them, then its size shrinks by a factor of
	$p^{\Gamma}$ for some $\Gamma>0$, with very high probability. This
	method has been used to obtain strong average-case lower bounds against
	$n^{2.5-o(1)}$-size De Morgan formulas
	\cite{DBLP:conf/stoc/KomargodskiR13,DBLP:journals/cc/ChenKKSZ15} (later
	improved to $n^{3-o(1)}$ by
	\cite{DBLP:journals/siamcomp/KomargodskiRT17}) and $n^{2-o(1)}$-size
	general formulas and deterministic branching programs.
	
	An obvious issue of applying this approach to comparator circuits is
	that we don't know how to shrink the size (i.e., number of gates) of a
	comparator circuit using random restrictions, as when we fix the value
	of a (non-trivial\footnote{We say that a wire is non-trivial if it is
		connected to some gate.}) wire, we may only be able to remove one gate
	in the worst scenario (i.e., the gate that is directly connected to that
	wire). The idea is that instead of shrinking the \emph{number of
		gates}, we will try to shrink the \emph{number of wires}. The reason
	why this can be useful is that \emph{one can effectively bound the
		number of gates of a comparator circuit by its number of wires}; this
	is a structural result of comparator circuits proved by G{\'{a}}l and
	Robere \cite{DBLP:conf/innovations/GalR20} and was the key ingredient
	in proving their worst-case lower bound. More precisely, they showed
	that any comparator circuit that has at most $\ell$ wires needs no more
	than $\ell^2$ gates (see \Cref{lem:structural}). Now following
	\cite{DBLP:conf/stoc/KomargodskiR13,DBLP:journals/cc/ChenKKSZ15}, one
	can show that under some certain type of random restriction that leaves
	a $p\vcentcolon=k/n$ fraction of the variables unfixed, for any large
	enough $k$, the number of wires of a comparator circuit will shrink
	(with very high probability) by roughly a factor of $p$, and hence its
	number of gates is bounded by $(p\cdot \ell)^2$. By letting
	$\ell=o\!\left(n^{1.5}/\!\left(k\cdot \sqrt{\log n}\right)\right)$, this
	size is less than $o\!\left(n/\log n\right)$ and from there one can
	show that the original circuit cannot compute some hard function on more
	than $1/2+1/2^{k^{\Omega(1)}}$ fraction of the inputs.
	
	While the above gives an average-case lower bound, it does not match
	the worst-case one, because we need to set $k\geq \log^c n$ for some
	(unspecified) constant $c>1$, which is controlled by the type of random
	restrictions and the extractor used in the construction of the hard
	function in both
	\cite{DBLP:conf/stoc/KomargodskiR13,DBLP:journals/cc/ChenKKSZ15}. This
	means we can only achieve a lower bound that is at best $n^{1.5}/(\log
	n)^{c+.5}$ (even for worst-case hardness). In order to be able to set
	$k=O(\log n)$, one way is to use a more sophisticated (so called non-explicit
	bit-fixing) extractor shown in
	\cite{DBLP:journals/siamcomp/KomargodskiRT17}, which will allow us to
	set $k\in \left[O\!\left(\log n\right),
	\Omega\!\left(n^{1/3}\right)\right]$ (with hardness
	$1/2+1/2^{\Omega(k)}$). Here we refine and simplify 
	this
	approach
	in the case of comparator circuits by using a more structural
	(block-wise) random restriction that shrinks the number of wires with
	probability one.
	Such a random restriction, when combined with a
	simple extractor, allows us to set $k\in \left[O\!\left(\log n\right),
	\Omega\!\left(n\right)\right]$.
	
	\vspace{0.2cm}
	\noindent\textbf{$\#$SAT algorithms.}
	Based on the above analysis in showing average-case lower bounds, one can try to design a SAT algorithm for comparator circuits in a way that is similar to that of \cite{DBLP:journals/cc/ChenKKSZ15}, which combines ``shrinkage under restrictions'' with a memorization technique. Suppose we have a comparator circuit $C$ with $s\vcentcolon=o\!\left(n^{1.5}/(k\cdot \sqrt{\log n})\right)$ gates and and $\ell\leq s$ non-trivial wires. By partitioning the variables into $n/k$ equal-size blocks, we can show that there is some block $S_i$ such that after fixing the variables outside of this block, the number of wires in the restricted circuit is at most $\ell_0\vcentcolon=\ell/(n/k)\leq o\!\left(\sqrt{n/\log n}\right)$. Again by the structural property of comparator circuits (\Cref{lem:structural}), this restricted circuit, which is on $k$ variables, has an equivalent circuit with $o\!\left(n/\log n\right)$ gates. Then to count the number of satisfying assignments for the original circuit, we can first memorize the numbers of satisfying assignments for all circuits with at most with $o\!\left(n/\log n\right)$ gates. There are $2^{o(n)}$ of them and hence we can compute in time $2^k\cdot 2^{o(n)}$ a table that stores those numbers. We then enumerate all possible $2^{n-k}$ restrictions $\rho\in\bool^{[n]\backslash S_i}$ and for each $\rho$ we look up the number of satisfying assignments of the restricted circuit $\rst{C}{\rho}$ from the pre-computed table. Summing these numbers over all the $\rho$'s gives the number of satisfying assignments of $C$. 
	
	However, there is a subtle issue in the above argument: although we
	know that a restricted circuit has an equivalent simple circuit with
	$o\!\left(n/\log n\right)$ gates, we do not know which simple circuit it
	is equal to. Note that when we fix the value of a (non-trivial) wire,
	we may only be able to remove one gate, so the number of gates left in
	the restricted circuit is possibly $s-(\ell-\ell_0)$, which can be much
	larger than $n/\log n$, and it is not clear how we can further simplify
	such a circuit \emph{efficiently}. To overcome this issue, we explore
	structural properties of comparator circuits to show how to construct
	a more sophisticated data structure that not only can tell us the
	number of satisfying assignments of a circuit with $o\!\left(n/\log
	n\right)$ gates but also allows us to \emph{efficiently} simplify each restricted circuit to an equivalent circuit with at most this many gates.
	
	\vspace{0.2cm}
	\noindent\textbf{Pseudorandom generators and ${\sf MCSP}$ lower bounds.}
	Our PRG against comparator circuits builds upon the paradigm of
	\cite{DBLP:journals/jacm/ImpagliazzoMZ19}, which was used to construct
	PRGs against circuits that admit ``shrinkage under
	\emph{pseudorandom} restrictions''. As in the proof of our
	average-case-lower bound, in order to apply this paradigm, we will 
	shrink the number of wires
	instead of the number of gates. Following
	\cite{DBLP:journals/jacm/ImpagliazzoMZ19}, we prove a pseudorandom
	shrinkage lemma for comparator circuits, which can then be used to
	obtain a PRG of seed length $s^{2/3+o(1)}$, where $s$ is the size of a
	comparator circuit.
	
	As observed in \cite{DBLP:journals/toct/CheraghchiKLM20}, one can
	modify the construction of the PRG in
	\cite{DBLP:journals/jacm/ImpagliazzoMZ19} to make it ``locally
	explicit''. This means that, for every seed, the output of the PRG, when
	viewed as a truth table of a function, has circuit complexity that is
	about the same as the seed length. Such a ``local'' PRG immediately
	implies that ${\sf MCSP}$ cannot be
    computed by comparator circuits of size $n^{1.5-o(1)}$, when the size
    parameter of ${\sf MCSP}$ is nearly-maximum (i.e., $n/O(\log n)$)
    \footnote{Note that ${\sf MCSP}$ takes two input parameters: a truth
    table and a size parameter $\theta$, and asks whether the given truth
table has circuit complexity at most $\theta$.}. Furthermore, we
	show a better trade-off between the size parameters of ${\sf MCSP}$ and
	the lower bound size of the comparator circuits, as in \Cref{thm:MCSP-intro}. This
	is similar to what was done by \cite{DBLP:conf/stoc/ChenJW20} in the
	case of ${\sf MCSP}$ lower bounds against De Morgan formulas.

	\subsection{Directions and open problems}


    We now state some further directions and open problems for which
    our work may be a starting point, or that are connected to our results.
	
	\vspace{0.2cm}
    \noindent\textbf{Algorithms and lower bounds for larger comparator circuits.}
    Our lower bounds and circuit analysis algorithms
    only work for comparator circuits of size up to
    $n^{1.5-o(1)}$. Can we improve this?
    Specifically,
    can we show 
    a lower bound of $n^{1.51}$ for comparator circuits
    computing a function of $n$ bits,
    and design algorithms for
    comparator circuits of the same size?
    In this paper, we used the
    random restriction method to analyse comparator circuits by shrinking
    the number of wires and using a structural result 
    of~\cite{DBLP:conf/innovations/GalR20} that relates the number of gates to
    the number of wires. Can we analyse the effect of random restrictions
    on the gates \emph{directly}, and show a shrinkage lemma for comparator
    circuits on the number of gates, with a shrinkage exponent
    $\Gamma>1/2$? Such a lemma would imply a lower bound that is better than
    $n^{1.5}$, and would allow us to design algorithms for
    comparator circuits larger than $n^{1.5}$.
		
	\vspace{0.2cm}
	\noindent\textbf{Hardness magnification near the state-of-the-art.}
	Recent work on \emph{hardness
		magnification}~\cite{DBLP:journals/eccc/OliveiraS18a,DBLP:conf/coco/OliveiraPS19,DBLP:conf/focs/ChenJW19,chen_beyond_2020}
	has shown that marginally improving the state-of-art worst-case lower bounds
	in a variety of circuit models
	would imply major breakthroughs in complexity theory.
	Although we don't prove this here, it is possible to show
	hardness magnification results for comparator circuits of size
	$n^{2+o(1)}$ by a simple adaptation of their arguments.
	Unfortunately, this does not match the best lower bounds we have for
	comparator circuits, which are around $n^{1.5-o(1)}$ as we have seen.
	Can 
	we
	show a hardness magnification phenomenom
	nearly matching the state-of-art lower bounds for comparator circuits?

	\vspace{0.2cm}
	\noindent\textbf{Extensions and restrictions of comparator circuits.}
	Recent work of Komarath, Sarma and Sunil~\cite{DBLP:journals/iandc/KomarathSS18}
	has provided characterisations of various complexity classes, such as
	$\L, \P$ and $\NP$,
	by means of extensions or restrictions of comparator circuits.
	Can our results and techniques applied to comparator circuits be
	extended to those variations of comparator circuits? Can this extension
	shed any light into the classes characterised
	by~\cite{DBLP:journals/iandc/KomarathSS18}?
	%

\bnote{I think it is better not to include the branching programs problem.. It has no connection to
anything we contribute. It has more place in a survey, or maybe inside the
introductory introduction (as we did on the discussion about $\maj$).}

	\section{Preliminaries}
	\subsection{Definitions and notations}

	For $n\in\mathbb{N}$, we denote $\{1,\dots,n\}$ by $[n]$.
	For a string $x$, we denote by $\mathrm{K}(x)$ the \emph{Kolmogorov
		complexity} of $x$, which is defined as the minimum length of a Turing
	machine that prints $x$ as output.
	
	\vspace{0.2cm}
	
	\noindent\textbf{Restrictions. } A \emph{restriction} for an
    $n$-variate Boolean function $f$, denoted by $\rho \in \set{0,1,*}^n$,
    specifies a way of fixing the values of some subset of variables for
    $f$. That is, if $\rho(i)$ is $*$, we leave the $i$-th variable
    unrestricted and otherwise fix its value to be $\rho(i) \in\bool$. We
    denote by $\rst{f}{\rho} : \blt^{\rho^{-1}(*)} \to \blt$ the restricted
    function after the variables are restricted according to $\rho$, where
    $\rho^{-1}(*)$ is the set of unrestricted variables.
	
	\vspace{0.2cm}
	
	\noindent\textbf{Comparator circuits. }	We define comparator circuits
    as a set of \emph{wires}
	labelled by an input literal (a variable $x_i$ or its negation
    $\neg x_i$), a sequence of \emph{gates},
	which are \emph{ordered} pairs of wires,
    and a designated \emph{output wire}.
	In other words, each gate is a pair of wires $(w_i,w_j)$,
	denoting that the wire $w_i$ receives the logical conjunction
	($\land$) of the wires, and
	$w_j$ receives the logical disjunction
	$(\lor)$.
    On a given input $a$,
    a comparator circuit computes as follows:
    each wire labelled with a literal $x_i$
    is initialised with $a_i$,
    and we update the value of the wires
    by following the sequence of gates;
    the output wire contains the result of the computation.
	A wire is called \emph{non-trivial} if there is a gate connected to
	this wire.
	Note that, if a comparator circuit has $\ell$ non-trivial wires and
	$s$ gates, then $\ell \leq s$.
	This means that lower bounds on the number of wires also imply lower bounds on
	the number of gates.
	
	\subsection{Structural properties of comparator circuits}

	For a gate $g$ in a comparator circuit and an input $x\in\bool^n$, we denote by $u_g(x)$ (resp. $v_g(x)$) the first (resp. second) in-value to the gate $g$ when given $x$ as input to the circuit.
	
	\begin{definition}[Useless Gates]\label{useless}
		We say that a gate $g$ in a comparator circuit is \emph{useless} if either one of the following is true:
		\begin{enumerate}
			\item for every input $x$, $(u_g(x),v_g(x))\in\{
			(0,1), (0,0), (1,1)\}$.
			\item for every input $x$, $(u_g(x),v_g(x))\in\{
			(1,0), (0,0), (1,1)\}$.
		\end{enumerate}
		We say that a useless gate is of \emph{TYPE-1} (resp. \emph{TYPE-2}) if it is the first (resp. second) case. Also, a gate is called \emph{useful} if it is not useless. 
	\end{definition}
	
	The following proposition allows us to remove useless gates from a comparator circuit.
	\begin{proposition}[{\cite[Proof of Proposition 3.2]{DBLP:conf/innovations/GalR20}}]
		\label{removing}
		Let $C$ be a comparator circuit whose gates are $g_1,g_2,\dots,g_s$ (where $g_s$ is the output gate) and let $g_i=(\alpha,\beta)$ be any useless gate in $C$.
		\begin{itemize}
			\item Suppose $g_i$ is of TYPE-1. Then the circuit $C'$ obtained from $C$ by removing the gate $g_i$ computes the same function as that of $C$.
			\item Suppose $g_i$ is of TYPE-2. Let $C'$ be the circuit whose gates are $g_1,g_2,\dots,g_{i-1},g'_{i+1}\dots,g'_s$, where for $j=i+1,\dots,s$, $g'_j$ is obtained from $g_j$ by replacing $\alpha$ with $\beta$ (if $g_j$ contains $\alpha$) and at the same time replacing $\beta$ with $\alpha$ (if $g_j$ contains $\beta$). Then $C'$ computes the same function as that of $C$.
		\end{itemize}
	\end{proposition}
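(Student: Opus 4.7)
The plan is to analyze, case by case, the action of a useless gate on its two input wires $(\alpha,\beta)$. The key observation I expect to establish is that a TYPE-1 useless gate acts as the identity on the pair $(u_g(x),v_g(x))$ for every input $x$, while a TYPE-2 useless gate acts as a swap of the two values. Once this is in hand, each bullet of the proposition reduces to a short simulation argument that propagates the effect through the remaining gates.

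For TYPE-1, I would apply the comparator rule $(u,v)\mapsto(u\wedge v,\,u\vee v)$ to each allowed pattern $(0,0),(0,1),(1,1)$ and observe that the output pair equals the input pair in all three cases. Since, by definition of useless, these are the only patterns that can occur on $(\alpha,\beta)$ immediately before $g_i$ fires on any input, the gate never alters the global wire state. Deleting it therefore yields a circuit $C'$ in which every wire carries the same value at every later step as in $C$, and in particular the output wire computes the same function. For TYPE-2, applying the comparator to $(0,0),(1,1),(1,0)$ yields $(0,0),(1,1),(0,1)$, which in every case is exactly the pair obtained by exchanging the two coordinates. Hence, on every input, processing $g_i$ in $C$ is equivalent to swapping the values on wires $\alpha$ and $\beta$. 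Removing $g_i$ undoes this swap, and the described relabelling (interchanging $\alpha$ and $\beta$ in each subsequent gate $g_j$ with $j>i$) compensates for it: by induction on $j\ge i+1$, after executing $g'_j$ in $C'$ the value on wire $\alpha$ equals the value on wire $\beta$ in $C$ after $g_j$, and symmetrically for $\beta$, while every other wire agrees exactly. Taking $j=s$ gives the claimed equivalence, provided the designated output wire is also swapped whenever it coincides with $\alpha$ or $\beta$.

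The main obstacle is essentially bookkeeping rather than a deep idea. The two small case tables above are immediate; the content of the inductive step is only that renaming two wires before a gate is the same as applying the original gate and renaming after, which follows from the symmetry $(u\wedge v,\,u\vee v)=(v\wedge u,\,v\vee u)$. The one subtle point to nail down is the treatment of the designated output wire in the TYPE-2 case, which the proposition leaves implicit: it must be relabelled along with the subsequent gates whenever it equals $\alpha$ or $\beta$, otherwise the two circuits would differ precisely by the final swap we introduced.
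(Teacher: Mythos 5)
Your argument is correct and is essentially the paper's own proof made explicit: TYPE-1 useless gates act as the identity on their in-values and can be deleted, while TYPE-2 gates act as a swap, which the relabelling of all later gates compensates for (your induction is exactly the ``re-wiring'' step the paper invokes). The output-wire subtlety you flag is handled by the paper's convention that the output is attached to the output gate $g_s$, which is itself rewired to $g'_s$, so the designation follows the swap automatically.
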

	\begin{proof}
		On the one hand, if $g$ is a TYPE-1 useless gate, then for every input to the circuit, the out-values of $g$ are the same as its in-values, so removing $g$ does not affect the function computed by the original circuit. On the other hand, if $g$ is of TYPE-2, then the in-values feeding to $g$ will get swapped after $g$ is applied. This has the same effect as removing $g$ and ``re-wiring'' the gates after $g$ so that a gate connecting one of the wires of $g$ gets switched to connect the other wire of $g$, as described in the second item of the proposition.
	\end{proof}
	
	We need the following powerful structural result for comparator circuits from \cite{DBLP:conf/innovations/GalR20}.  	
	\begin{theorem}[{\cite[Theorem 1.2]{DBLP:conf/innovations/GalR20}}]\label{thm:GR}
		If $C$ be is a
        comparator circuit with $\ell$ wires and $s$ gates such that every gate in $C$ is useful, then $s\leq \ell\cdot(\ell-1)/2$.
	\end{theorem}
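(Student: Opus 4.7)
My plan is to introduce a potential function $\Phi$ on the states of the circuit and show that it drops by at least one whenever a useful gate is applied. Define $\Phi(t)$ to be the number of unordered pairs of wires $\{i, j\}$ such that, after the first $t$ gates have been executed, the Boolean functions $f_i, f_j$ currently stored at those two wires are \emph{incomparable} in the pointwise partial order (that is, neither $f_i \leq f_j$ nor $f_j \leq f_i$). Since there are only $\binom{\ell}{2}$ unordered pairs of wires, we always have $0 \leq \Phi(t) \leq \binom{\ell}{2}$. Once we establish $\Phi(t+1) \leq \Phi(t) - 1$ whenever the $(t+1)$-st gate is useful, the bound $s \leq \binom{\ell}{2} = \ell(\ell-1)/2$ is immediate.

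Fix a useful gate at time $t+1$ acting on wires $i$ and $j$. Because the gate is not TYPE-$1$ useless, the in-values $(u_g(x), v_g(x)) = (1, 0)$ must appear for some input $x$; because it is not TYPE-$2$ useless, $(u_g(y), v_g(y)) = (0, 1)$ appears for some $y$. Together, these say exactly that the functions $f_i, f_j$ are incomparable just before the gate. Right after the gate, wires $i$ and $j$ hold $f_i \wedge f_j$ and $f_i \vee f_j$, which are always comparable. Hence the pair $\{i, j\}$ contributes exactly $-1$ to $\Phi(t+1) - \Phi(t)$. Pairs of wires disjoint from $\{i, j\}$ are unaffected by the gate and contribute $0$, so it only remains to control pairs of the form $\{i, w\}$ and $\{j, w\}$ for a third wire $w$.

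The main step, and the one I expect to be the principal obstacle, is the pointwise inequality
\[
\mathbf{1}\!\left[h \parallel (f_i \wedge f_j)\right] + \mathbf{1}\!\left[h \parallel (f_i \vee f_j)\right] \;\leq\; \mathbf{1}[h \parallel f_i] + \mathbf{1}[h \parallel f_j]
\]
for every Boolean function $h$ (to be applied to each $h = f_w$), where $\parallel$ denotes incomparability. I would prove it by partitioning inputs into the eight cells $n_{abc} \vcentcolon= |\{x : f_i(x) = a,\, f_j(x) = b,\, h(x) = c\}|$ with $(a,b,c) \in \{0,1\}^3$, and rewriting each incomparability indicator as a conjunction of two nonzeroness conditions on sums of $n_{abc}$'s (for instance, $h \parallel (f_i \wedge f_j)$ is equivalent to $n_{001} + n_{011} + n_{101} > 0$ \emph{and} $n_{110} > 0$). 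The inequality then reduces to two short implications: first, if both indicators on the left equal $1$ then both indicators on the right must equal $1$; second, if exactly one indicator on the left equals $1$ then at least one indicator on the right equals $1$. Each implication is verified by inspecting which $n_{abc}$'s are forced to be positive.

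Summing the contributions over all pairs of wires, every useful gate decreases $\Phi$ by at least $1$. Since $\Phi$ is non-negative and bounded above by $\binom{\ell}{2}$, the total number of useful gates satisfies $s \leq \binom{\ell}{2} = \ell(\ell-1)/2$, as required.
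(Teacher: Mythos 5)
The paper does not actually prove this statement --- it is imported verbatim from G\'al--Robere --- so there is no internal proof to compare against; judged on its own, your potential-function argument is correct. Your identification of usefulness with incomparability of the two in-functions is exactly right (failing TYPE-1 gives an input with in-values $(1,0)$, failing TYPE-2 gives one with $(0,1)$), and the key inequality you isolate does hold, and more cleanly than via the eight-cell bookkeeping: if $f_w$ is comparable to $f_i$, say $f_w \le f_i$, then $f_w \le f_i \vee f_j$, while $f_w \ge f_i$ gives $f_w \ge f_i \wedge f_j$; this immediately yields both implications you reduce to (comparability with both inputs forces comparability with both outputs, and comparability with at least one input forces comparability with at least one output). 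Consequently every comparator gate leaves $\Phi$ non-increasing, every useful gate decreases it by at least one (the pair it acts on goes from incomparable to comparable, and the mixed pairs cannot increase in total), so the number of useful gates is at most $\binom{\ell}{2} = \ell(\ell-1)/2$, which is the claim; this is in the same spirit as the structural argument in the cited work.
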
 
	
	\Cref{removing} and \Cref{thm:GR} together give the following lemma. 
	\begin{lemma}\label{lem:structural}
        Every comparator circuit with $\ell > 0$ wires has an equivalent comparator circuit with $\ell$ wires and with at most $\ell\cdot (\ell-1)/2$ gates.
	\end{lemma}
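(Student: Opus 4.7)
The plan is to obtain the equivalent circuit by iteratively stripping off useless gates from the original circuit until only useful gates remain, and then invoking \Cref{thm:GR} on the resulting circuit.

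More concretely, I would start from an arbitrary comparator circuit $C$ computing the target function on $\ell$ wires, and argue by induction on the number of gates. If every gate in $C$ is useful, then \Cref{thm:GR} immediately gives $s \leq \ell(\ell-1)/2$, and we are done. Otherwise, pick any useless gate $g_i = (\alpha, \beta)$. By \Cref{removing}, depending on whether $g_i$ is of TYPE-1 or TYPE-2, we can either delete $g_i$ outright, or delete $g_i$ and relabel the occurrences of $\alpha$ and $\beta$ in the subsequent gates $g_{i+1},\dots,g_s$; in either case the resulting circuit $C'$ computes the same function as $C$. Crucially, both operations preserve the wire set, so $C'$ still has exactly $\ell$ wires, while the number of gates drops by one. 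Applying the induction hypothesis to $C'$ then yields an equivalent circuit with $\ell$ wires and at most $\ell(\ell-1)/2$ gates.

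The only point that requires a small check is that the process is well-defined: after a TYPE-2 removal, the rewritten gates $g'_{i+1},\dots,g'_s$ are still ordered pairs of wires of $C$, hence form a legitimate comparator circuit on the same $\ell$ wires, and the designated output wire is unaffected (or, if the output was $\alpha$ or $\beta$, can simply be relabelled accordingly). Since the gate count is a non-negative integer that decreases strictly at each step, the induction bottoms out after at most $s$ iterations, at which stage every remaining gate is useful and \Cref{thm:GR} applies.

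I do not expect any serious obstacle here: the entire content of the lemma is to combine the two previous results, and the main thing to be careful about is to verify that successive applications of \Cref{removing} keep the circuit well-formed with the same $\ell$ wires, so that \Cref{thm:GR} can indeed be applied at the end.
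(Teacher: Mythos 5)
Your proposal is correct and matches the paper's intent exactly: the paper gives no separate argument, simply stating that \Cref{removing} and \Cref{thm:GR} together yield the lemma, and your induction (repeatedly removing useless gates, which preserves the function and the $\ell$ wires, then applying \Cref{thm:GR} once all gates are useful) is precisely that combination spelled out. Your extra check about well-formedness after TYPE-2 rewiring and relabelling the output wire is a reasonable, correct detail the paper leaves implicit.
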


	\section{Average-case Lower Bounds}\label{sec:lower-bounds}
	In this section, we prove our average-case lower bound against comparator circuits. We first describe the hard function.
	
	\subsection{The hard function}
	
	\noindent\textbf{List-decodable codes. }
	Recall that a $(\zeta,L)$-list-decodable binary code is a function $\mathrm{Enc}\colon \bool^n \to \bool^m$ that maps $n$-bit messages to $m$-bit codewords so that, for each $y\in\bool^m$, there are at most $L$ codewords in the range of $\mathrm{Enc}$ that have relative hamming distance at most $\zeta$ from $y$. We will use the following list-decodable code.
	
	\begin{theorem}[See e.g., {\cite[Proof of Theorem 6.4]{DBLP:journals/cc/ChenKKSZ15}}] \label{ecc}
        There is a constant $c>0$ such that for any given $k=k(n)>c\cdot
        \log n$, there exists a binary code $\mathrm{Enc}$ mapping $n$-bit
        message to a codeword of length $2^{k}$, such that $\mathrm{Enc}$
        is $(\zeta, L)$-list-decodable for
        $\zeta=1/2-O\!\left(n/2^{k/2}\right)$ and $L\leq
        O\!\left(2^{k/2}/n\right)$. Furthermore, there is a polynomial-time
        algorithm for computing the $i$-th bit of $\mathrm{Enc}(x)$, for
        any inputs $x \in \bool^{n}$ and $i \in \left[2^{k}\right]$.
	\end{theorem}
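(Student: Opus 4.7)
The plan is to build the required code as a concatenation of a Reed--Solomon outer code with a Hadamard inner code. Assume $k$ is even (otherwise we adjust by constants), set $q \vcentcolon= 2^{k/2}$, and fix an $\mathbb{F}_2$-linear identification $\mathbb{F}_q \simeq \mathbb{F}_2^{k/2}$. Interpret the $n$-bit message $x$ as the coefficient sequence of a polynomial $p_x \in \mathbb{F}_q[X]$ of degree at most $d \vcentcolon= \lceil 2n/k \rceil$, which fits since $d \cdot \log q \geq n$. Take the outer codeword to be $(p_x(\alpha))_{\alpha \in \mathbb{F}_q}$ and the inner Hadamard code to send $\beta \in \mathbb{F}_q$ to $(\langle \beta, z\rangle)_{z \in \mathbb{F}_q} \in \bool^q$, where $\langle\cdot,\cdot\rangle$ is the $\mathbb{F}_2$-inner product. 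The composition is a map $\mathrm{Enc}\colon \bool^n \to \bool^{2^k}$ whose $(\alpha, z)$-th output bit equals $\langle p_x(\alpha), z\rangle$, computable in $\poly(n)$ time by Horner's rule over $\mathbb{F}_q$; this handles the local-computation clause.

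For the list-decoding parameters I would follow the standard ``Goldreich--Levin plus Guruswami--Sudan'' pipeline for concatenated codes. Given a received word $y$ of agreement $1/2 + \zeta$ with some unknown $\mathrm{Enc}(x)$, for each column $\alpha \in \mathbb{F}_q$ run Goldreich--Levin on the inner block $y_\alpha \in \bool^q$ to enumerate all candidate symbols $\beta \in \mathbb{F}_q$ whose Hadamard encoding agrees with $y_\alpha$ on a $1/2 + \Omega(\zeta)$ fraction of positions. A standard averaging argument shows that the correct symbol $p_x(\alpha)$ lies in this candidate set for an $\Omega(\zeta)$ fraction of columns. Feeding the resulting weighted-agreement instance to the Guruswami--Sudan soft-decision decoder for the outer Reed--Solomon code, and tuning $\zeta = \Theta(n/2^{k/2})$ so that the total agreement sits just above the Guruswami--Sudan threshold $\sqrt{qd}$, produces a global list of size $O(q/d) = O(2^{k/2}/n)$, matching the stated parameters.

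The main obstacle is the tightness of the list-size bound. A naive appeal to the Johnson bound, treating $\mathrm{Enc}$ as a generic binary code of distance $1/2 - O(n/(k \cdot 2^{k/2}))$, would only give a list of size $O(2^k/n^2)$, which is off by a factor of $2^{k/2}/n$ from what we need. Obtaining the sharper bound $O(2^{k/2}/n)$ requires exploiting the algebraic structure of the two layers asymmetrically: the inner Goldreich--Levin step costs only a polynomial factor per column, while the outer Guruswami--Sudan decoder attains the optimal $O(q/d)$ list size for the resulting polynomial decoding instance. Verifying that the constant $c$ in the hypothesis $k \geq c \log n$ can be chosen to ensure that the soft-decoding threshold sits strictly below the actual agreement $\zeta \cdot q^2$, so that the pipeline yields a non-trivial radius and list, is the most delicate step, but it follows from the standard Guruswami--Sudan analysis with the parameters selected above.
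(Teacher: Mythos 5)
The paper itself gives no self-contained proof of \Cref{ecc}: it is quoted from the proof of Theorem 6.4 in \cite{DBLP:journals/cc/ChenKKSZ15}, where the code is exactly your Reed--Solomon-concatenated-with-Hadamard construction (outer field $\mathbb{F}_q$, $q=2^{k/2}$, degree $d\approx 2n/k$, inner Hadamard, each bit computable via Horner), but the list-decodability there is a purely \emph{combinatorial} Johnson-bound statement; no Goldreich--Levin or Guruswami--Sudan decoding is involved, and none is needed, since the theorem is only used inside a Kolmogorov-complexity counting argument (\Cref{thm:lower_bound}, \Cref{lem:counting}). So your construction and the local-computability clause match the intended proof, but your decoding analysis has a genuine gap at the ``tuning'' step. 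With $q=2^{k/2}$ and $d=\lceil 2n/k\rceil$ the concatenated code has relative distance only $1/2-d/(2q)=1/2-\Theta\!\left(n/(k\,2^{k/2})\right)$, so the Johnson bound certifies small lists only for agreement advantage $\epsilon\gtrsim\sqrt{d/q}\approx\sqrt{n/(k\,2^{k/2})}$, and the GL-plus-soft-decision pipeline is worse still: the per-column GL lists have size $\Theta(1/\epsilon^2)$, so the Guruswami--Sudan list-recovery condition forces $\epsilon\gtrsim (d/q)^{1/4}$. The statement asks for $\epsilon=O\!\left(n/2^{k/2}\right)$ for \emph{every} $k\geq c\log n$, and as soon as $2^{k/2}\gg nk$ (e.g.\ $k=n^{0.1}$; the paper needs $k$ up to $\Theta(n)$) one has $n/2^{k/2}\ll\sqrt{n/(k\,2^{k/2})}\ll (d/q)^{1/4}$, so the agreement sits far \emph{below} the threshold $\sqrt{qd}$ rather than ``just above'' it, and your pipeline proves nothing at that radius. (Also, $O(q/d)=\Theta\!\left(k\,2^{k/2}/n\right)$, not $O\!\left(2^{k/2}/n\right)$, though that is minor.) Your framing of the obstacle as ``Johnson loses a factor in the list size'' misses the real issue: at the literal radius $1/2-O\!\left(n/2^{k/2}\right)$ the Johnson precondition fails entirely for large $k$.

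The repair is to prove (and to use) the combinatorial statement in the regime where it actually holds and where the paper invokes it: for every $\epsilon\geq\sqrt{Cn/2^{k/2}}$, the Johnson bound for this concatenated code gives list size $O(1/\epsilon^2)$, which at the minimal $\epsilon$ is exactly $O\!\left(2^{k/2}/n\right)$; and in the application the code is only ever used at agreement $1/2+n/2^{k/4}$ (\Cref{lem:counting} and the proof of \Cref{thm:lower_bound}), where $\epsilon^2=n^2/2^{k/2}\gg d/q$ for all $k\geq c\log n$ and Johnson already yields a list of size $O\!\left(2^{k/2}/n^2\right)\leq O\!\left(2^{k/2}/n\right)$, indexable with $k/2$ bits as required. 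In other words, drop the algorithmic machinery, and note that the radius as literally printed in \Cref{ecc} should be read as the Johnson radius $1/2-\Theta\!\left(\sqrt{n}\cdot 2^{-k/4}\right)$ of this construction; neither your argument nor the cited one establishes the stronger radius $1/2-O\!\left(n/2^{k/2}\right)$ for large $k$.
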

	
	\begin{definition}[Generalized Andreev's Function]\label{def-And}
		Let $k$ be a positive integer. Define $A_{k}\colon\bool^{n+n}\to\bool$ as follow:
		\[
		A_{k}(x_1,\dots,x_{n},y_1,\dots,y_{n}) \vcentcolon= \mathrm{Enc}(x_1,\dots,x_{n})_{\alpha(y_1,\dots,y_{n})},
		\]
		where $\mathrm{Enc}$ is the code from \Cref{ecc} that maps $n$ bits to $2^{k}$ bits, and $\alpha\colon\bool^n\to\bool^k$ is defined as
		\[
		\alpha(y_1,\dots,y_{n}) \vcentcolon= \left(\bigoplus_{i=1}^{n/k} y_i, \bigoplus_{i=n/k+1}^{2n/k} y_i, \dots, \bigoplus_{i=(k-1)n/k + 1}^{n} y_i\right).
		\]
		That is, the function $\alpha$ partitions $y$ evenly into $k$ consecutive blocks and outputs the parities of the variables in each block.
	\end{definition}
	
	Note that the function $A_{k}$ defined above is polynomial-time computable since we can compute  $\alpha(y)$ and $\mathrm{Enc}(x)_i$ for any given $i$ in $\poly(n)$ time.

	\subsection{Proof of the average-case lower bound}
	We will show a lower bound on the number of wires, which automatically
    implies a lower bound on the number of gates.

	\begin{theorem}\label{thm:lower_bound}
		There exist constants $c,d\geq 1$ such that the following holds.
		For any $k\geq c\cdot \log n$,
		there is a polynomial-time computable function $f_k$ such that, for every comparator circuit $C$ whose number of wires is
		\[
		\frac{n^{1.5}}{d\cdot k\cdot \sqrt{\log n}},
		\]
		we have
		\[
		\Prob_{x\in\bool^n}[f_k(x)=C(x)]\leq \frac{1}{2} + \frac{1}{2^{\Omega(k)}}.
		\]
	\end{theorem}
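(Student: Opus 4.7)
My plan is to instantiate $f_k$ as the generalized Andreev function $A_k$ from \Cref{def-And} with its message size set to $n/2$, so $f_k$ becomes a function of $n$ bits split evenly as $(x,y)\in\{0,1\}^{n/2}\times\{0,1\}^{n/2}$. The strategy combines a block-wise restriction on $y$ with the structural result of \Cref{lem:structural} to force any small comparator circuit to a tiny one on $k$ variables, followed by a list-decoding counting argument in the spirit of \cite{DBLP:conf/stoc/KomargodskiR13,DBLP:journals/cc/ChenKKSZ15}.

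For the \emph{shrinkage with probability one} step, partition $y$ into the same $k$ consecutive blocks used by the parity extractor $\alpha$. Given a comparator circuit $C$ with $\ell \leq n^{1.5}/(dk\sqrt{\log n})$ wires, I would pick in each block $i$ a ``survivor'' position $j_i^\star$ \emph{deterministically} that minimizes the number $w_{i,j_i^\star}$ of wires labelled by that $y$-variable or its negation; by averaging, $\sum_i w_{i,j_i^\star} \leq (2k/n)\cdot \ell = O(\sqrt{n/\log n})/d$. For \emph{every} setting of $x$ and of the non-survivor $y$-variables, all $x$-wires and all non-survivor $y$-wires become constants; applying \Cref{removing} to eliminate the useless gates yields an equivalent comparator circuit on the $k$ surviving variables with at most $\ell' = O(\sqrt{n/\log n})/d$ non-trivial wires, hence (by \Cref{lem:structural}) at most $s' = O(n/(d^2\log n))$ gates.

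For the \emph{counting and list-decoding} step, let $x^\star\in\{0,1\}^{n/2}$ and $\mathbf{b}$ (the non-survivor $y$-values) be uniformly random. The restricted hard function becomes $y'\mapsto \mathrm{Enc}(x^\star)_{\pi_{\mathbf{b}}(y')}$, where $\pi_{\mathbf{b}}:\{0,1\}^k\to\{0,1\}^k$ is a coordinatewise XOR-shift, hence a permutation of the $2^k$-bit codeword $\mathrm{Enc}(x^\star)$. The number of distinct truth tables of comparator circuits on $k$ variables with $s'$ gates and $\ell'$ wires is $M \leq 2^{O(s'\log\ell' + \ell'\log k)} \leq 2^{n/(Cd^2) + o(n)}$ for an absolute constant $C$. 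By \Cref{ecc}, for each $\mathbf{b}$ and each restricted truth table $T$, at most $L = O(2^{k/2}/n)$ messages $x^\star$ produce a codeword $\mathrm{Enc}(x^\star)$ within relative distance $1/2-\tau$ of $T\circ\pi_{\mathbf{b}}^{-1}$, where $\tau = \Theta(n/2^{k/2})$. Hence for each fixed $\mathbf{b}$, the ``suspicious'' $x^\star$'s (those producing restricted agreement above $1/2+\tau$) number at most $ML$, a $2^{-\Omega(k)}$-fraction of $\{0,1\}^{n/2}$ once $d$ is a large constant and $k \leq \Omega(n)$.

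Averaging the restricted agreement over $x^\star$, $\mathbf{b}$, and the $k$ free bits $y'$ recovers $\Pr_{x,y}[C(x,y)=f_k(x,y)]$, so splitting into suspicious and non-suspicious $x^\star$ yields $\Pr[C=f_k]\leq 1/2+\tau+ML/2^{n/2}\leq 1/2+2^{-\Omega(k)}$, provided $k\geq c\log n$ with $c$ large enough that $\tau=O(n/2^{k/2}) \leq 2^{-\Omega(k)}$. The main technical hurdle is quantitative: the circuit count $M$ and the list-size $L$ together must be a $2^{-\Omega(k)}$ fraction of $2^{n/2}$, which forces the constant $d$ to be sufficiently large. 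The quadratic wire-to-gate bound of \Cref{lem:structural} is essential here, since it lets us amortize the $s'\log\ell'$ factor in $\log M$ against the message size $n/2$; the remaining parameter choices are arithmetic.
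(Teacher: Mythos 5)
Your proposal is essentially correct, and it follows the paper's overall skeleton (the generalized Andreev function of \Cref{def-And}, one surviving $y$-variable per parity block so that the code index stays a uniform XOR-shift of the free bits, shrinkage of the \emph{wire} count with probability one, and \Cref{removing} plus \Cref{lem:structural} to pass to an equivalent restricted circuit with $O(\ell'^2)$ gates). Where you genuinely diverge is the final accounting step: the paper fixes an $x$ of Kolmogorov complexity at least $n/2$ and derives a contradiction by \emph{encoding} the specific restricted circuit (at most $n/4$ bits), the $k$ block parities of the fixed part, and a list index, i.e.\ an incompressibility argument; you instead bound, for each fixed assignment $\mathbf{b}$ of the frozen $y$-variables, the number of ``suspicious'' messages by $M\cdot L$, where $M$ counts all truth tables of comparator circuits on $k$ variables with $\ell'$ wires and $O(\ell'^2)$ gates, and union-bound over this fixed class. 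This counting route is sound (your union bound over the $M$ possible restricted tables, shifted by $\pi_{\mathbf b}$, is valid because each $x^\star$'s own restricted table lies in that class), and it is in fact the argument the paper itself uses for the generalization (\Cref{thm:general} via \Cref{lem:counting}); the compression argument buys a slightly cleaner bookkeeping of the $n/4 + k + k/2 + O(\log n) < n/2$ budget, while your version generalizes more readily. Your survivor selection (cheapest variable within each block) also differs mildly from the paper's choice of a single transversal part $S_i$ by averaging; both give total surviving wire count $O(k\ell/n)$ and both keep one live variable per block, so either works.

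Two small points to tighten. First, your argument, like the paper's, needs an upper bound on $k$ (you write ``$k\le\Omega(n)$''), but the theorem statement has none; the paper dismisses $k> n/8$ as trivial, and you should add the one-line reason (e.g.\ for such $k$ the circuit has fewer than $n$ wires, hence cannot read all variables, and agrees with parity on exactly half the inputs), or else state the range of $k$ explicitly. Second, the bound on the number of restricted circuits should read $M\le 2^{Cn/d^2+o(n)}$ for an absolute constant $C$ (constant in the numerator), which is what your subsequent ``take $d$ large'' step actually uses; as written, $2^{n/(Cd^2)+o(n)}$ has the constant on the wrong side, though this is clearly a slip rather than a gap.
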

	\begin{proof}
		Let $A_{k}$ be the generalized Andreev's function on $2n$ variables. Let $C$ be a comparator circuit on $2n$ variables with $\ell \leq n^{1.5}/ \!\left(d\cdot k\cdot \sqrt{\log n}\right)$ wires, where $d\geq 1$ is a sufficiently large constant. To avoid some technicalities due to divisibility that can be overcome easily, we assume that $n$ is divisible by $k$.

		We need to upper bound the following probability.
		\begin{align*}
		\Prob_{x,y\in\bool^{n}\times \bool^{n}}[A_{k}(x,y)=C(x,y)] &\leq \Prob_{x,y}[A_{k}(x,y)=C(x,y) \mid \mathrm{K}(x)\geq n/2]  +\Prob_{x}[\mathrm{K}(x)< n/2]\\
		&\leq \Prob_{x,y}[A_{k}(x,y)=C(x,y) \mid \mathrm{K}(x)\geq n/2]+ \frac{1}{2^{n/2}}.
		\end{align*}
		Let $x$ be any fixed $n$-bit string with Kolmogorov complexity at least $n/2$. Let $A'\colon\bool^n\to\bool$ be
		\[
		A'(y) \vcentcolon=A_{k}(x,y), 
		\]
		and let $C'$ be a comparator circuit on $n$ variables with at most $\ell$ wires defined as
		\[
		C'(y) \vcentcolon=C(x,y).
		\]
		We will show that
		\[
		\Prob_{y\in\bool^{n}}\left[A'(y)=C'(y)\right]\leq \frac{1}{2}+ \frac{n}{2^{k/4}}.
		\]
		
		First of all, let us divide the $n$ variables of $C'$ into $n/k$
		parts, each of which contains $k$ variables, as follows. We first
		partition the $n$ variables evenly into $k$ \emph{consecutive}
		blocks, denoted as $B_1,B_2,\dots,B_k$. Then we define the $i$-th
		part $S_i$, where $i\in[n/k]$, to be the union of the $i$-th
		variables in each of $B_1,B_2,\dots,B_k$. That is
		\[
		S_i \vcentcolon= \bigcup_{j\in[k]} \left\{y\colon y \text{ is the $i$-th variables of } B_j\right\}.
		\]
		Now we count the number of wires that are labelled by the variables in each $S_i$ and let
		\[
		w_i \vcentcolon= \left|\{u \colon \text{$u$ is a wire labelled by
        some $x\in S_i$ (or its negation)}\}\right|.
		\]
		We have
		\[
		\sum_{i\in[n/k]} w_i = \ell,
		\]
		which implies that there is a particular $i\in[n/k]$ such that
		\[
		w_{i} \leq \frac{\ell}{n/k} \leq\frac{1}{d} \cdot \sqrt{\frac{n}{\log n}} =\vcentcolon \ell_0.
		\]
		
		Next, we will consider restrictions that fix the values of the
		variables outside $S_i$. Note that if we fix the value of a variable
		$x_i$ in a comparator circuit, then we can obtain a restricted circuit
		so that all the wires that are labelled by either $x_i$ or $\neg{x_i}$
		are eliminated, after some appropriate updates on the gates in
		the circuit. This is not an obvious fact. One way to see this is that
		once we fix the value of a wire, the gate that directly connects this
		wire becomes \emph{useless} in the sense of \Cref{useless} so it can be
		removed after some appropriate ``re-wirings'' of the gates in the circuit as described in \Cref{removing}. Then we can keep doing this until no gate is connected to that wire, in which case the wire can be removed from the circuit.
		
		Now we have
		\[
		\Prob_{y\in\bool^{n}}\left[A'(y)=C'(y)\right] = \Prob_{\rho\in\bool^{[n]\backslash S_i},z\in\bool^{k}}\left[\rst{A'}{\rho}(z)=\rst{C'}{\rho}(z)\right].
		\]
		It suffices to upper bound
		\[
		\Prob_{z\in\bool^{k}}\left[\rst{A'}{\rho}(z)=\rst{C'}{\rho}(z)\right],
		\]
		for every $\rho\in\bool^{[n]\backslash S_i}$.
		For the sake of contradiction, suppose for some $\rho$, we have
		\begin{equation}\label{ext-property}
		\frac{1}{2} +\frac{n}{2^{k/4}} < \Prob_{z\in\bool^{k}}\left[\rst{A'}{\rho}(z)=\rst{C'}{\rho}(z)\right]
		= \Prob_{z\in\bool^{k}}\left[\mathrm{Enc}(x)_{\alpha}=\rst{C'}{\rho}(z)\right],
		\end{equation}
		where $\alpha\in\bool^k$ is 
		\[
		\alpha_j \vcentcolon= 
		\parity\!\left(\rho|_{B_j \backslash S_i}\right) \oplus z_j,
		\]
		and $\rho|_{B_j \backslash S_i}$ denotes the partial assignment
		given by $\rho$ but restricted to only variables in the set $B_j
		\backslash S_i$.
		Note that $\alpha$ is uniformly distributed for uniformly random $z$.
		Therefore, if we have the values of	$\parity\!\left(\rho|_{B_j \backslash S_i}\right)$ 
		for each
		$j\in[k]$ ($k$ bits in total), and if we know the restricted
		circuit $\rst{C'}{\rho}$, then we can compute the codeword
		$\mathrm{Enc}(x)$ correctly on at least $1/2 +n/ 2^{k/4}$
		positions, by evaluating $\rst{C'}{\rho}(z)$ for every $z\in\bool^k$. As
		a result, we can list-decode $\mathrm{Enc}(x)$, and, using additional
		$k/2$ bits (to specify the index of $x$ in the list), we can
		recover $x$ exactly. Finally, note that the number of wires in
		$\rst{C'}{\rho}$ is at most $\ell_0$. Therefore, by Lemma~\ref{lem:structural}, such a
		circuit can be described using a string of length at most
		\begin{align*}
		O\!\left(\ell_0\cdot\log(n) + \ell_0^2\cdot \log(\ell_0)\right) &\leq O\!\left(\ell_0^2\cdot \log n\right)\\
		&= O\!\left(\frac{n}{d^2\cdot \log n}\cdot \log n\right)\\
		&\leq n/4,
		\end{align*}
		where the last inequality holds when $d$ is sufficiently large.
		Therefore, we can recover $x$ using less than 
		\[
		n/4 + k + k/2 + O(\log n) < n/2
		\]
		bits. 
		Here we assume $k \leq n/8$ since otherwise the theorem can be shown trivially. 
		This contradicts the fact that the Kolmogorov complexity of $x$ is at least $n/2$.
	\end{proof}
	
	\section{Tight Average-case Lower Bounds from a Ne\v{c}iporuk-Type Property}\label{sec:generalization}
	
	Here, we describe a generalization of the average-case lower bound in \Cref{sec:lower-bounds} to circuit classes whose worst-case lower bounds can be proved via Ne\v{c}iporuk's method.
	
    \begin{theorem}\label{thm:general}
        There is a constant $c>1$ such that the following holds. Let
        $\mathcal{C}$ be a class of Boolean circuits that is closed under
        restrictions.  Suppose that, for any 
        $k \in [c \cdot \log n,\,n/3]$,
        there
        exists a partition of the $n$ variables into $m\vcentcolon=n/k$
        equal-sized blocks $S_1,S_2,\dots,S_{m}$ and a collection of
        $k$-input-bit functions $\mathcal{H}$ such that
        \begin{enumerate}
            \item 
                $|\mathcal{H}|\leq 2^{n/2}$, and
            \item 
                for every $C\in \mathcal{C}_n$ of size
                $s(n,k)$, there exists some block $S_i$ such that 
                $
                \left\{\rst{C}{\rho}\right\}_{\rho\in\bool^{[n]\backslash
                S_i}}\subseteq\mathcal{H}.
                $
        \end{enumerate}
        \bnote{What is $s(n,k)$?}
        \znote{$s(n,k)$ is the size of the circuits, which can depend on
        the parameters $n$ and $k$. $k$ can be thought as a parameter that
        controls the quality of the average-case lower bounds. In the case of
        comparator circuits, we have $s(n,k)=n^{1.5}/(k\cdot \sqrt{\log n})$.}
        Then for any 
        $k \in [c \cdot \log n,\,n/6]$,
        there exists a
        polynomial-time computable function $f_{k}$ which satisfies
        \[
            \Prob_{x\in\bool^n}[C(x)=f_{k}(x)]\leq \frac{1}{2}+\frac{1}{2^{\Omega(k)}},
        \]
        for every $C\in \mathcal{C}_n$ of size $s(n/2,k)$.
        \bnote{Why the change from $s(n,k)$ to $s(n/2,k)$?}
        \\
        \znote{This is some technicality in the proof. We show lower bounds on circuits over $2n$ variables, but the assumption is on circuits over $n$ bits.}
        
        \znote{The theorem has two parts. One starts with ``Suppose....'' and the other starts with ``Then for...''. The parameters $n$ in different parts do not mean the same quantity. $n$ in the second part is twice the $n$ in the first part. (that's why in the second part $s(n/2,k)$ is equal to $s(n,k)$ in the first part and $k$ is less than $n/3$ in the first part but becomes $n/6$ in the second part) Maybe we should use different letters.}
    \end{theorem}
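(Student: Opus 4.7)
The plan is to adapt the Andreev-function argument from the proof of \Cref{thm:lower_bound}, replacing the comparator-circuit-specific wire-shrinking with the abstract partition hypothesis. Set $N \vcentcolon= n/2$ and apply the hypothesis at variable count $N$ and parameter $k$ to obtain a partition $S_1, \ldots, S_{N/k}$ of $[N]$ (with $|S_i| = k$) and a collection $\mathcal{H}$ of $k$-input functions with $|\mathcal{H}| \leq 2^{N/2}$, so that every $C' \in \mathcal{C}_N$ of size $s(N, k)$ has some block $S_i$ with $\{\rst{C'}{\rho}\}_{\rho} \subseteq \mathcal{H}$. I would then define the hard function $f_k \colon \bool^n = \bool^N \times \bool^N \to \bool$ by $f_k(x, y) \vcentcolon= \mathrm{Enc}(x)_{\alpha(y)}$, where $\mathrm{Enc}\colon \bool^N \to \bool^{2^k}$ is the list-decodable code of \Cref{ecc} and $\alpha\colon \bool^N \to \bool^k$ is engineered to be compatible with the given partition: fix an arbitrary bijection $\pi_i \colon [k] \to S_i$ for each $i$ and set
\[
\alpha(y)_j \vcentcolon= \bigoplus_{i=1}^{N/k} y_{\pi_i(j)}, \qquad j \in [k].
\]

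The crucial feature of $\alpha$ --- playing the role of the block-based partition used in the proof of \Cref{thm:lower_bound} --- is that for every block $S_i$ and every $\rho \in \bool^{[N] \setminus S_i}$, the map $z \mapsto \alpha(\rho, z)$ is an affine bijection of $\bool^k$ of the form $z \mapsto c_\rho \oplus \tau(z)$, for a constant $c_\rho \in \bool^k$ depending on $\rho$ and a fixed coordinate permutation $\tau$ determined by the $\pi_i$'s. In particular, a uniform $z$ yields a uniform $\alpha(\rho, z)$, which is what enables the list-decoding step below.

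For the lower bound itself, I would split by the Kolmogorov complexity of $x$: for a threshold $T$ to be chosen,
\[
\Prob_{x, y}\bigl[C(x, y) = f_k(x, y)\bigr] \leq \Prob_x\bigl[\mathrm{K}(x) < T\bigr] + \max_{x: \mathrm{K}(x) \geq T}\Prob_y\bigl[C(x, y) = f_k(x, y)\bigr].
\]
For any fixed $x$ with $\mathrm{K}(x) \geq T$, the restricted circuit $C'(y) \vcentcolon= C(x, y)$ lies in $\mathcal{C}_N$ with size $s(N, k)$, so the hypothesis yields a block $S_i$ with $\{\rst{C'}{\rho}\}_{\rho} \subseteq \mathcal{H}$. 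Suppose for contradiction that the right-hand maximum exceeds $1/2 + 2^{-\Omega(k)}$; averaging over $\rho$ then produces some $\rho$ for which $\rst{C'}{\rho}$ agrees with $z \mapsto \mathrm{Enc}(x)_{\alpha(\rho, z)}$ on more than a $1/2 + 2^{-\Omega(k)}$ fraction of $z$. Evaluating $\rst{C'}{\rho}$ on all of $\bool^k$ and composing with $\alpha(\rho, \cdot)^{-1}$ yields a length-$2^k$ string agreeing with $\mathrm{Enc}(x)$ on more than a $1/2 + 2^{-\Omega(k)}$ fraction of its coordinates; list-decoding via \Cref{ecc} then recovers $x$ from a list of $L \leq O(2^{k/2}/N)$ candidates. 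Hence $x$ is describable using $\log |\mathcal{H}| \leq N/2$ bits for the index of $\rst{C'}{\rho}$ in $\mathcal{H}$, $k$ bits for $c_\rho$, $\leq k/2 + O(1)$ bits for the list index, and $O(\log n)$ bits for the block index $i$ and for a universal decoder, for a total of $N/2 + 3k/2 + O(\log n)$ bits. Choosing $T$ just above this, while bounding $\Prob_x[\mathrm{K}(x) < T] \leq 2^{T - N}$, contradicts $\mathrm{K}(x) \geq T$, and a standard calculation shows the parameter range $k \in [c \log n, n/6]$ leaves enough slack for both requirements to hold simultaneously.

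The main obstacle is engineering $\alpha$ from a partition that is given only abstractly: the hypothesis supplies no combinatorial compatibility between the partition and any nice indexing function, yet the argument needs $z \mapsto \alpha(\rho, z)$ to be a bijection after every restriction outside the chosen block in order for uniform $z$ to hit every codeword position uniformly. I would resolve this by \emph{defining} $\alpha$ from the partition itself via the bijections $\pi_i$, which makes the bijectivity automatic. A secondary concern is the parameter bookkeeping: the two distinct meanings of $n$ (one for circuits appearing in the hypothesis and one for circuits appearing in the conclusion) force the inner Andreev argument to run on $N = n/2$ variables and the lower bound to target size $s(n/2, k)$, which together account for the factor-of-two loss producing the conclusion's range $k \leq n/6$ out of the hypothesis's range $k \leq N/3$.
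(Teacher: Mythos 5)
Your hard function coincides with the paper's: the partition-adapted Andreev function (the paper's $A_{S,k}$), with $\alpha$ built from the $j$-th elements of the blocks so that restricting everything outside $S_i$ turns $\alpha$ into a shift $z \mapsto c_\rho \oplus z$. The divergence --- and the gap --- is in the hardness argument. You keep the plain Kolmogorov-complexity argument of \Cref{thm:lower_bound} and charge ``$\log|\mathcal{H}| \leq N/2$ bits for the index of $\rst{C'}{\rho}$ in $\mathcal{H}$'' (with your $N = n/2$). This is not a valid description step for plain $\mathrm{K}(x)$: the decompressor must reconstruct the actual $k$-bit function from that index, and $\mathcal{H}$ is only assumed to \emph{exist} --- it is an arbitrary family of up to $2^{N/2}$ functions on $k$ bits, with no computability or explicitness guarantee, so it need not have any short program (an explicit encoding could take $2^{N/2}\cdot 2^{k}$ bits). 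In \Cref{thm:lower_bound} this issue does not arise because the restricted object is a small comparator circuit, which admits an explicit self-contained encoding of length at most $n/4$; that is precisely why the paper states that \Cref{thm:general} ``requires a slightly different argument.'' The paper's proof drops Kolmogorov complexity altogether: it forms the fixed shifted family $\mathcal{H}' = \{z \mapsto h(z\oplus w) : h \in \mathcal{H},\, w \in \bool^k\}$ and proves a counting lemma (\Cref{lem:counting}), via list-decodability of \Cref{ecc} and a union bound over $\mathcal{H}'$, showing that for all but an exponentially small fraction of $x$ no function in $\mathcal{H}'$ agrees with $\mathrm{Enc}(x)$ on more than a $1/2 + n/2^{k/4}$ fraction of positions; such ``good'' $x$ then play the role of your incompressible $x$. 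Your route can be repaired in the same spirit --- for instance by using Kolmogorov complexity conditioned on an encoding of $\mathcal{H}$ and the partition (the bound $\Prob_x[\mathrm{K}(x \mid w) < T] \leq 2^{T-N}$ holds for every fixed $w$, and the hypothesis that $\mathcal{H}$ is the \emph{same} collection for every circuit is what keeps the conditioning independent of $x$) --- but as written the indexing step is unjustified.

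A secondary issue is that your bookkeeping has no slack at the top of the claimed range: the description costs $N/2 + 3k/2 + O(\log n)$ bits, while the threshold must satisfy $T \leq N - \Omega(k)$ for $\Prob_x[\mathrm{K}(x) < T]$ to be $2^{-\Omega(k)}$, and at $k = N/3$ (i.e., $k = n/6$) these two requirements collide. So the ``standard calculation'' only goes through after shrinking the constant in the range of $k$ or tightening the accounting (the paper's own application of \Cref{lem:counting} is similarly tight near the top of its range, so this is a matter of unoptimized constants rather than a conceptual obstacle).
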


	\noindent \textbf{Remark.}
    In the original Ne\v{c}iporuk's argument for getting worst-case lower
    bounds, it is only required that, for every $C\in \mathcal{C}$, there is
    some block such that the number of distinct functions, after fixing the
    variables outside of the block, is at most $2^{n/2}$, and \emph{this set
    of functions can be different for different $C$}. For
    \Cref{thm:general}, we need something stronger which says that it is
    the \emph{same set of $2^{n/2}$ functions for every $C$}.
    We remark that, though the weaker condition is sufficient for
    \emph{worst-case} lower bounds,
    all applications of Ne\v{c}iporuk's method known to us
    also prove the stronger condition,
    thus yielding \emph{average-case} lower bounds by~\Cref{thm:general}.
	
	\vspace{0.2cm}
	
	\Cref{thm:general} requires a slightly different argument than that of \Cref{thm:lower_bound}. Its proof is presented in \Cref{sec:generalization-proof}.
	
	By combining \Cref{thm:general} with known structural properties for various models (see e.g., \cite{DBLP:books/daglib/0028687}), we get that for the class of circuits $\mathcal{C}$ of size $s$, where
	\begin{itemize}
		\item $\mathcal{C}$ is the class of general formulas, and $s=n^2/O(k)$, or
		\item $\mathcal{C}$ is the class of deterministic branching programs or switching networks,
		and $s=n^2/O(k\cdot\log n)$, or
		\item $\mathcal{C}$ is the class of nondeterministic branching programs, parity branching programs, or span programs, and $s=n^{1.5}/O(k)$,
	\end{itemize}
    there exists a function $f_k$ such that
    $\Prob_{x\in\bool^n}[C(x)=f_{k}(x)]\leq 1/2+1/2^{\Omega(k)}$ for every
    $C\in \mathcal{C}$,
	which matches the state-of-the-art worst-case lower bounds (up to a multiplicative constant) by letting $k=O(\log n)$.

	\section{\texorpdfstring{$\#$SAT}{\#SAT} Algorithms}
	In this section, we present our $\#$SAT algorithm for comparator
	circuits. As mentioned briefly in \Cref{sec:techniques}, we will need a
    preprocessed data structure that enables us to efficiently convert a
    circuit with small number of wires but large number of gates to an
    equivalent circuit (with the same number of wires) whose number of
    gates is at most quadratic in the number wires.
	
	\subsection{Memorization and simplification of comparator circuits}
	\begin{lemma}\label{DS}
		Let $n,\ell\geq 1$ be integers. For any fixed labelling of $\ell$ wires on $n$ variables, there is a data structure $\mathsf{DS}$ such that
		\begin{itemize}
			\item $\mathsf{DS}$ can be constructed in time $2^{n}\cdot \ell^{O(\ell^2)}$.
			\item Given access to $\mathsf{DS}$ and given any comparator circuit $C$ with $\ell$ wires (whose labelling is consistent with the one used for $\mathsf{DS}$) and $s$ gates, we can output in time $\poly(s,\ell)$ the number of satisfying assignments of $C$. Moreover, we obtain a comparator circuit with $\ell$ wires and at most $\ell\cdot(\ell-1)/2$ gates that is equivalent to $C$.
		\end{itemize}
	\end{lemma}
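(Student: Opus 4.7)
The plan is to organise $\mathsf{DS}$ around the set of \emph{clean} comparator circuits on the fixed wire labelling, meaning those in which every gate is useful. By \Cref{thm:GR} every clean circuit has at most $\ell(\ell-1)/2$ gates, so there are at most $\ell^{O(\ell^2)}$ of them, matching the claimed preprocessing budget. First, a single pass over all $2^n$ inputs computes the set $T_0\subseteq\bool^\ell$ of reachable initial wire-vectors and a multiplicity function $N_0:T_0\to\mathbb{N}$ counting how many $x\in\bool^n$ produce each initial vector; this costs $O(2^n\cdot\ell)$. A BFS then enumerates the clean circuits starting from the empty one, only ever appending useful gates, and maintains a hash table keyed by state function so that clean circuits with the same behaviour on $T_0$ share an ID. For each clean circuit $C'$ encountered, I simulate it on $T_0$ in time $\ell^{O(1)}\cdot 2^\ell$ to obtain its state function, which immediately yields: for every candidate next gate $g=(A,B)$ a useful-ness label (useful / TYPE-1 useless / TYPE-2 useless, read off from the projection of the image onto wires $A,B$); for every output wire $w\in[\ell]$ the satisfying-assignment count (by summing $N_0$ over the appropriate preimage); and, in the useful case, the ID of the clean circuit obtained by appending $g$ to $C'$. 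The total cost is $2^n\cdot\ell^{O(\ell^2)}$.

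At query time, given $C=g_1,\dots,g_s$ with output wire $w$, I maintain a current clean-circuit ID (initially the empty circuit) together with a wire permutation $\tau\in S_\ell$ (initially the identity), preserving the invariant that on every input $x\in\bool^n$ the state of wire $j$ in $C$ agrees with the state of wire $\tau(j)$ in the current clean circuit. For each $g_i=(\alpha_i,\beta_i)$ I translate to $g'_i=(\tau(\alpha_i),\tau(\beta_i))$ and look up its useful-ness flag in $\mathsf{DS}$; if useful, I advance the ID by the precomputed pointer and leave $\tau$ unchanged; if TYPE-1 useless, I do nothing; if TYPE-2 useless, I leave the ID fixed but post-compose $\tau$ with the transposition $(\alpha_i\leftrightarrow\beta_i)$. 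After all $s$ gates, I output the canonical circuit stored for the final ID as the equivalent comparator circuit with at most $\ell(\ell-1)/2$ gates, together with the precomputed satisfying-assignment count for that ID and output wire $\tau(w)$. Each step costs $O(\ell)$, so the overall query time is $\poly(s,\ell)$.

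Correctness reduces to verifying the $\tau$-invariant by induction on $i$: the only delicate case is a TYPE-2 useless gate, where \Cref{useless} guarantees that the wire-values at $(\tau(\alpha_i),\tau(\beta_i))$ in the current clean circuit always lie in $\{(1,0),(0,0),(1,1)\}$, so the gate would act as a swap on those values, which is exactly compensated by updating $\tau$ with the transposition $(\alpha_i\leftrightarrow\beta_i)$. This is the same re-wiring performed in the proof of \Cref{removing}, but folded into a single running permutation rather than rewritten into the remaining gate sequence. The main conceptual point, and the hurdle one must get past, is that the query never needs to know $C$'s own state function: a clean-circuit ID together with $\tau$ carries enough information, thanks to the precomputation, to recover both the satisfying-assignment count and an equivalent circuit of size at most $\ell(\ell-1)/2$ in $\poly(s,\ell)$ time.
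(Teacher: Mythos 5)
Your construction is correct and is essentially the paper's own proof: you enumerate exactly the circuits all of whose gates are useful (the paper's tree of useful-gate extensions, with the type of each useless candidate next gate and per-output-wire satisfying counts precomputed by brute force over inputs), bound their number and depth via \Cref{thm:GR}, and at query time process $C$'s gates one by one, skipping useless gates by the re-wiring of \Cref{removing}. Your only departures are implementation-level refinements within the same budget—hash-consing clean circuits with equal state functions, precomputing $T_0,N_0$, and folding the TYPE-2 re-wiring into a running wire permutation $\tau$ instead of rewriting the remaining gate sequence (just note the update must be $\tau'(\alpha_i)=\tau(\beta_i)$ and $\tau'(\beta_i)=\tau(\alpha_i)$, i.e.\ pre-composition with the transposition, not post-composition)—none of which changes the argument or the bounds.
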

	\begin{proof}
		We know that every comparator circuit with $\ell$ wires has an equivalent circuit with $\ell\cdot(\ell-1)/2$ gates (\Cref{lem:structural}). Therefore, we can try to memorize the number of satisfying assignments for each of these circuits (by brute-force). Then for a given circuit $C$ with $\ell$ wires and $s$ gates where $s\gg \poly(\ell)$, we need to simplify $C$ to be a circuit with $\ell\cdot(\ell-1)/2$ gates so that we can look up its number of satisfying assignments, which was already computed. However, it is not clear how we can \emph{efficiently} simplify such a comparator circuit.
		
		The idea here is to remove the useless gates one by one (from left to right). To do this, firstly, we need to be able to tell whether a gate is useless, and secondly whenever we remove a useless gate, we need to ``re-wire'' the gates that come after that gate, which can depend on the types of the useless gate that we are removing, as described in \Cref{removing}.
		
        More specifically, $\mathsf{DS}$ will be a ``tree-like'' structure
        of depth at most $\ell\cdot(\ell-1)/2+2$ where the internal nodes are
        labelled as gates.
        Note that a path from
        the root to any internal node in the tree gives a sequence of gates, which
        specifies a comparator circuit up to a choice of the output wire.
        We will require the label of 
        every internal node to be a useful gate
        in the circuit specified by the path from the root to the node.
        In other words,
        each internal node will branch on all possible useful gates that
        could occur next in the circuit.
        Moreover, each leaf is either labelled as a useless gate, with respect to the circuit
        specified by the path from the root to the current leaf, or is
        labelled as a
        single wire that is designed to be the output wire. 

        For every leaf
        that is a useless gate, we store its type, and for each leaf that
        is a single wire, we store the number of satisfying assignments of
        the circuit that is specified by the path from the root to the
        leaf. Moreover, each internal node is a useful gate whose children are
        indexed by the set of all possible gates 
        (each is an ordered pair of wires) 
        and the set of wires
        (called an \emph{output leaf}).
        Note that checking whether a new gate is useless and
        computing its type require evaluating the current circuit on all
        possible inputs, which takes time $2^{n}\cdot\poly(\ell)$, but this
        is fine with our running time. Similarly, we can compute the number
        of satisfying assignments in each output leaf by brute force. Note
        that by \Cref{thm:GR}, the depth of such a tree is at most
        $\ell\cdot(\ell-1)/2+2$, otherwise there would be a comparator
        circuit with $\ell$ wires that has more than $\ell\cdot(\ell-1)/2$
        useful gates. Since each internal node has at most $\ell^2$ children, the
        tree has at most $\ell^{O\!\left(\ell^{^2}\right)}$ nodes in total.
        Since each node can be constructed in time
        $2^n\cdot\poly(\ell)$, the running time is clear.
		
		To look up the number of the satisfying assignments of a given circuit
		$C$ (with a labelling of the wires that is consistent with the one
		used for $\mathsf{DS}$), we start from the root of $\mathsf{DS}$, and
		move down the tree as we look at the gates in $C$ one by one (from
		left to right in the natural way). If we reach an output leaf, we
		output the number of satisfying assignments stored in that leaf. 
        However,
        if we reach a leaf $v$ that is specified as a useless
		gate, we remove the corresponding gate in $C$ and update the gates
		that come after it according to the type of this useless gate, using
		\Cref{removing}. Once we update the circuit, we start again from the
		parent of $v$ and look at the next gate in the updated circuit. We
		repeat this until we reach an output leaf. 
	\end{proof}

	\subsection{The algorithm}
	We will show an algorithm for comparator circuits with small number of wires, while the number of gates can be polynomial.
	
	\begin{theorem}\label{thm:main_sat}
		There is a constant $d>1$ and a deterministic algorithm such that for every $k\leq n/4$, given a comparator circuit on $n$ variables with at most
		\[
		\frac{n^{1.5}}{d\cdot k\cdot \sqrt{\log n}}
		\]
		wires and $\poly(n)$ gates, the algorithm outputs the number of satisfying assignments of $C$ in time
		\[
		2^{n - k}\cdot\poly(n).
		\]
	\end{theorem}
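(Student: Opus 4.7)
The strategy combines the block-wise partition used in the proof of \Cref{thm:lower_bound} with the memorisation data structure provided by \Cref{DS}. Write $\ell \vcentcolon= n^{1.5}/(d k \sqrt{\log n})$ for the wire bound and set $\ell_0 \vcentcolon= (1/d)\sqrt{n/\log n}$, so that $\ell/(n/k)\leq \ell_0$.

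First, I would partition the variables into $n/k$ blocks $S_1,\dots,S_{n/k}$ of size $k$ each; letting $w_i$ be the number of non-trivial wires of $C$ labelled by a literal on some variable in $S_i$, we have $\sum_i w_i \leq \ell$, so some block $S_i$ satisfies $w_i \leq \ell_0$. Such a block is located in $\poly(n)$ time by a single pass over the wire labels. I would then invoke \Cref{DS} with parameters $(k,\ell_0)$ using the canonical labelling inherited from the $w_i$ wires of $C$ that are labelled on $S_i$ (padding with trivial wires if $w_i<\ell_0$). The construction cost is
\[
2^{k}\cdot \ell_0^{O(\ell_0^2)} \;=\; 2^{k}\cdot 2^{O(n/d^2)},
\]
which, for $k\leq n/4$ and $d$ chosen sufficiently large, is $o(2^{n-k})$.

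Second, I would enumerate every restriction $\rho\in\bool^{[n]\setminus S_i}$ and, for each $\rho$, convert $\rst{C}{\rho}$ into a comparator circuit on the $k$ variables of $S_i$ as follows. Every wire labelled by a variable outside $S_i$ is treated as a constant wire whose value is specified by $\rho$; I then sweep through the gates of $C$ in order, and whenever the current gate has a constant input wire it is useless in the sense of \Cref{useless} (TYPE-1 if that constant input is $0$, TYPE-2 if it is $1$), so I delete it via \Cref{removing}, propagating the induced wire-swap through all subsequent gates and through the designated output wire. After the sweep, every constant wire has no gate attached to it and is discarded, leaving a comparator circuit on the $k$ variables of $S_i$ with exactly $w_i\leq \ell_0$ non-trivial wires — whose labels match those used at preprocessing — and with $\poly(n)$ gates. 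Querying $\mathsf{DS}$ on this circuit returns $\#\mathrm{SAT}(\rst{C}{\rho})$ in $\poly(n)$ time by \Cref{DS}.

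Finally, outputting $\sum_{\rho} \#\mathrm{SAT}(\rst{C}{\rho})$ yields $\#\mathrm{SAT}(C)$. The total running time is $2^{n-k}\cdot\poly(n)$: the preprocessing cost is absorbed into the overall $\poly(n)$ slack, and each of the $2^{n-k}$ iterations costs $\poly(n)$ for both the simplification sweep and the $\mathsf{DS}$ query. The main technical hurdle I expect lies in the bookkeeping of the constant-wire elimination sweep: a TYPE-2 removal swaps the roles of the constant wire and its partner in every later gate as well as in the designated output wire, and these cumulative renamings have to be tracked precisely so that the circuit handed to $\mathsf{DS}$ uses exactly the canonical labelling fixed at preprocessing and so that its output wire corresponds to the value computed by $\rst{C}{\rho}$.
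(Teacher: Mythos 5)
Your proposal is correct and follows essentially the same route as the paper's proof: block partition plus averaging to find a block $S_i$ with at most $\ell_0$ wires, the preprocessed structure of \Cref{DS} built for those $w_i\leq\ell_0$ wires on the $\leq k$ variables of $S_i$, and enumeration over the $2^{n-k}$ restrictions with a per-restriction lookup; the paper merely leaves the constant-wire elimination sweep implicit (deferring to the discussion around \Cref{removing}), whereas you spell it out. One small correction to your bookkeeping: the type of a gate with a constant input depends on \emph{which position} the constant occupies, not only on its value — a constant $0$ on the first ($\land$-receiving) input or a constant $1$ on the second ($\lor$-receiving) input makes the gate TYPE-1, while a constant $1$ on the first input or a constant $0$ on the second makes it TYPE-2 — and getting this right matters, since TYPE-2 removal requires the wire-swap in all later gates while TYPE-1 removal does not.
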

	\begin{proof}
		Let $C$ be a comparator circuit with $\ell \leq n^{1.5}/ \!\left(d\cdot k\cdot \sqrt{\log n}\right)$ wires and $\poly(n)$ gates, where $d\geq 1$ is a sufficiently large constant.
		
        We partition the $n$
        variables almost-evenly into $\lfloor n/k \rfloor$ consecutive
        blocks, denoted as $S_1,S_2,\dots,S_{\lfloor n/k \rfloor}$. 
		We then count the number of wires that are labelled by the variables in each $S_i$ and let
		\[
		w_i \vcentcolon= \left|\{u \colon \text{$u$ is a wire labelled by
        some $x\in S_i$ (or its negation)}\}\right|.
		\]
		We have
		\[
		\sum_{i\in[\lfloor n/k \rfloor]} w_i = \ell,
		\]
		which implies that there is a particular $i\in[k]$ such that
		\[
		w_{i} \leq \frac{\ell}{\lfloor n/k \rfloor} \leq\frac{1}{d} \cdot \sqrt{\frac{n}{\log n}} =\vcentcolon \ell_0.
		\]
		Moreover, we can find such $i$ efficiently.
		\\\\
		\noindent\textbf{Constructing $\mathsf{DS}$.}
        Using \Cref{DS}, we create a data structure $\mathsf{DS}$ 
        with $w_i$ wires and $\card{S_i} \leq k$ variables and a
        labelling consistent with that of $C$ for the wires labelled by
        variables from $S_i$.
        This can be done in time
		\[
		2^{k}\cdot \ell_0^{O(\ell_0^2)}=2^{k + O\!\left(\ell_0^2\cdot \log \ell_0\right)} \leq 2^{k+n/2}.
		\]
		
		\noindent\textbf{Enumeration.}
        For each $\rho\in\bool^{[n]\backslash S_i}$, we obtain a restricted
        circuit $\rst{C}{\rho}$ (on either $k$ or $k+1$ variables), which
        has $\ell_0$ wires (whose labelling is consistent with the one used
        for $\mathsf{DS}$ created above) and has $\poly(n)$ gates. Then
        using $\mathsf{DS}$, we can efficiently look up the number of
        satisfying assignments of $\rst{C}{\rho}$. Finally we sum over
        these numbers over all such $\rho$'s and this gives the number of
        satisfying assignments of $C$.
		
		The total running time of the above algorithm is
		\[
		2^{k + n/2} + 2^{n-k}\cdot \poly(n) = 2^{n-k}\cdot \poly(n),
		\]
		as desired.
	\end{proof}

	\section{Pseudorandom Generators and \texorpdfstring{${\sf MCSP}$}{MCSP} Lower Bounds}
	In this section, we show a PRG for small comparator circuits,
	and derive from it lower bounds for comparator circuits computing ${\sf MCSP}$.
	
	\subsection{Proof of the PRG}
	We start with some definitions and notations.
	
	\begin{itemize}
		\item For a Boolean function $f$, we denote by $\ell(f)$ the minimum number of wires in a comparator circuit computing $f$.
		\item 
            We will often describe a restriction $\rho \in \set{0,1,*}^n$ 
            as a pair
		$(\sigma, \beta) \in \blt^n \times \blt^n$.
		The string $\sigma$ is the characteristic vector of the set of
		coordinates that are assigned $*$ by $\rho$, and
		$\beta$ is an assignment of values to the remaining
		coordinates.
		The string $\sigma$ is also called a \emph{selection}.
		\item 
		We say that a distribution $\cald$ on $\set{0,1}^n$
		is a \emph{$p$-regular} random \emph{selection} if
		$\Prob_{\sigma \flws \cald} \left[ \sigma(i) = 1 \right] = p$
		for every $i \in [n]$.
		
	\end{itemize}
	
    As mentioned in \Cref{sec:techniques}, we will need a result saying
    that the number of wires in a comparator circuit shrinks with high
    probability under pseudorandom restrictions.
	
	\begin{lemma}
		\label{lemma:pseudorandom_shrinkage_wires}
		Let $c$ be a constant and let 
		$f : \blt^n \to \blt$.
		Let $\ell := \ell(f)$
		and
        $p = \ell^{-2/3}$, and
        suppose that $\ell = n^{\Omega(1)}$.
		There exists a $p$-regular pseudorandom
		selection $\mathcal{D}$ over $n$ variables that is samplable using
		$r=\polylog(\ell)$ random bits such that
		\begin{equation*}
		\Prob_{\sigma \flws \cald,\, \beta \flws \blt^n} 
		\left[ 
		\ell(\rst{f}{(\sigma, \beta)}) \geq 
		2^{3 \sqrt{c \log \ell}} \cdot p\ell
		\right]
		\leq
		2\cdot\ell^{-c}.
		\end{equation*}
		Moreover, there exists a circuit of size $\polylog(\ell)$ such that, given $j\in\bool^{\log n}$ and a seed
		$z\in\bool^{r}$, the circuit computes the $j$-th coordinate of
		$\cald(z)$.
	\end{lemma}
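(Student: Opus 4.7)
The plan is to follow the Impagliazzo-Meka-Zuckerman pseudorandom shrinkage paradigm, adapted to shrink the number of \emph{wires} of a comparator circuit. Fix a comparator circuit $C_f$ of minimum wire count $\ell = \ell(f)$ computing $f$, and write $W(C_f)$ for its set of wires. The useless-gate elimination of \Cref{removing}, exactly as used in the proof of \Cref{thm:lower_bound}, implies that if a restriction $\rho = (\sigma,\beta)$ fixes a variable $x_i$, then every wire labeled by $x_i$ or $\neg x_i$ can be removed from $C_f$. Hence, for any $\sigma, \beta \in \blt^n$,
\[
\ell(\rst{f}{(\sigma,\beta)}) \leq N(\sigma) \vcentcolon= \sum_{w \in W(C_f)} \sigma(i_w),
\]
where $i_w$ is the index of the variable labeling $w$. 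Crucially, this bound depends only on $\sigma$; the assignment $\beta$ plays no role in counting surviving wires.

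Under a truly random $p$-regular selection (each coordinate independently $1$ with probability $p$), $N(\sigma)$ is a sum of $\ell$ negatively correlated Bernoulli variables with mean $p$, and a Chernoff--Hoeffding bound gives $\Prob[N(\sigma) \geq t \cdot p\ell] \leq \exp(-\Omega(tp\ell))$ for $t \gg 1$. Substituting $t = 2^{3\sqrt{c\log\ell}}$ and $p\ell = \ell^{1/3}$, the right-hand side is far smaller than $\ell^{-c}$, so the truly random statement holds with substantial slack. The derandomization step is to replace the uniform distribution by an explicit $p$-regular distribution $\cald$ samplable from $\polylog(\ell)$ bits that fools this tail event to within additive error $\ell^{-c}$. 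A standard choice is an almost $k$-wise independent, $p$-biased distribution with $k = \Theta(\log\ell)$ and statistical error $\ell^{-\Omega(c)}$, constructible with seed length $O(k\log n + \log(1/\varepsilon)) = \polylog(\ell)$. Since tail events of $N(\sigma)$ of this form are determined by its first $k$ moments (via Markov applied to $(N(\sigma))^k$), matching $k$ moments to within $\ell^{-\Omega(c)}$ error preserves the bound. The ``moreover'' clause follows because these generators support local evaluation: each coordinate is computable by a $\polylog(\ell)$-size circuit using standard polynomial-evaluation or hash-based constructions.

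The main obstacle I expect is the joint calibration of the three parameters --- the slack factor $2^{3\sqrt{c\log\ell}}$, the moment order $k$, and the seed length $\polylog(\ell)$ --- to simultaneously yield the failure probability $2\ell^{-c}$. This is precisely the trade-off handled in~\cite{DBLP:journals/jacm/ImpagliazzoMZ19}, and the proof should reduce to invoking their sampler with the straightforward substitution of ``wires'' for ``gates''. A conceptually cleaner alternative, mirroring the block-wise restriction used in \Cref{thm:lower_bound}, would be to partition $[n]$ into $\lfloor 1/p\rfloor$ blocks and use a hash-based pseudorandom selection that picks approximately one live coordinate per block; this bypasses the moment computation at the price of slightly modifying the $p$-regularity guarantee, while still yielding the sought-after wire-shrinkage bound.
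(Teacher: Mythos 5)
There is a genuine gap, and it sits at the very first step of your argument, before any derandomization. Your reduction bounds $\ell(\rst{f}{(\sigma,\beta)})$ by $N(\sigma)=\sum_i w_i\,\sigma(i)$, where $w_i$ is the number of wires labelled by $x_i$ or $\neg x_i$, and then claims a Chernoff tail $\Prob[N(\sigma)\geq t\cdot p\ell]\leq \exp(-\Omega(tp\ell))$ even under truly random selections. But $N(\sigma)$ is a \emph{weighted} sum (all wires carrying the same variable survive or die together --- they are perfectly positively correlated, not negatively correlated), and its tail is governed by the largest weight. If a single ``heavy'' variable labels, say, $\ell^{1/2}$ wires, then with probability $p=\ell^{-2/3}$ that variable survives and $N(\sigma)\geq \ell^{1/2}\gg 2^{3\sqrt{c\log\ell}}\cdot\ell^{1/3}$, so $\Prob[N(\sigma)\geq 2^{3\sqrt{c\log\ell}}p\ell]\gtrsim \ell^{-2/3}$, which exceeds $\ell^{-c}$ for any constant $c>2/3$. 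Since the lemma must hold for an arbitrary constant $c$, no concentration statement about $N(\sigma)$ alone --- truly random or pseudorandom, moment-based or otherwise --- can deliver the claimed bound; the same objection applies to your block-wise alternative, which still leaves each heavy variable alive with probability about $p$.

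The paper's proof (following Impagliazzo--Meka--Zuckerman) contains exactly the ingredient you are missing: a heavy/light split together with a structural recombination step. Variables labelling at least $p^{1-\alpha}\ell$ wires (with $\alpha=\sqrt{c/\log\ell}$) are declared heavy; there are at most $(1/p)^{1-\alpha}$ of them, so by the bounded-independence bound of \Cref{lemma:small_chernoff_bounded} the probability that $h\approx\frac{3}{2}\sqrt{c\log\ell}$ of them stay unfixed is at most $\ell^{-c}$. Surviving heavy variables are not handled by counting their wires but are eliminated via \Cref{lemma:recursive-cc}, i.e.\ the decomposition $f=\bigvee_h(\mathbbm{1}_h\wedge \rst{f}{\rho_h})$, at a multiplicative cost $2^{|H(\rho)|+1}$ in wire count; only the light variables' contribution is bounded by the $k$-wise-independent Chernoff bound (\Cref{lemma:general_chernoff_bounded}), whose quality depends on the maximum weight $m<p^{1-\alpha}\ell$. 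The peculiar slack factor $2^{3\sqrt{c\log\ell}}$ in the statement is precisely the product of the $2^{h}$ recombination cost and the $p^{-\alpha}$ heaviness threshold --- it is not ``substantial slack'' left over from a strong Chernoff bound, as your write-up suggests. Your treatment of the derandomized sampler and of local computability (the paper cites a $k$-wise independent, locally computable selection from Cheraghchi et al.) is fine, but without the heavy-variable mechanism the proof does not go through.
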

    The proof of \Cref{lemma:pseudorandom_shrinkage_wires} follows closely
    that of {\cite[Lemma 5.3]{DBLP:journals/jacm/ImpagliazzoMZ19}}, except
    for that here we also need to show that the pseudorandom restriction
    can be computed with small size circuits. 
    Such a restriction
    is proved to exist in Lemma~18
    of~\cite{DBLP:journals/toct/CheraghchiKLM20}. 
    For completeness, a proof is presented in
    \Cref{sec:shrinkage_app}.
	
	\begin{theorem}[Local PRGs]\label{prg}
		For every $n \in \mathbb{N}$, $\ell=n^{\Omega(1)}$, and
		$\varepsilon \geq 1/\poly(n)$, there is a pseudorandom generator
		$G\colon \bool^{r}\to\bool^{n}$, with seed length
		\[
		r=\ell^{2/3 + o(1)}
		\]
		that $\varepsilon$-fools comparator circuits with $\ell$ wires over $n$ variables. 
		Moreover, for every seed $z\in\bool^r$, there is a circuit $D_z$ of
		size $\ell^{2/3 + o(1)}$ such that, given as input $j\in[n]$, $D_z$
		computes the $j$-th bit of $G(z)$.
	\end{theorem}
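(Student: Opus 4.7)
The plan is to follow the Impagliazzo--Meka--Zuckerman (IMZ) paradigm~\cite{DBLP:journals/jacm/ImpagliazzoMZ19} for constructing PRGs from shrinkage, using \Cref{lemma:pseudorandom_shrinkage_wires} as the required pseudorandom shrinkage lemma for comparator circuits. The PRG $G$ is defined by iteratively applying pseudorandom restrictions: at each level we sample a $p$-regular pseudorandom selection $\sigma$ of positions to keep unfixed and sample pseudorandom values $\beta$ for the fixed positions, and then fill the unfixed positions using the output of the PRG for the shrunken circuit, invoked recursively. Because each application of \Cref{lemma:pseudorandom_shrinkage_wires} drives $\ell$ down to roughly $\ell^{1/3 + o(1)}$, after $d = O(\log \log \ell)$ levels the wire count drops to $\polylog(\ell)$ and can be handled trivially.

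More concretely, I set up the schedule $\ell_0 = \ell$ and $\ell_{i+1} = 2^{3\sqrt{c\log \ell_i}}\cdot \ell_i^{1/3}$ with $p_i = \ell_i^{-2/3}$. At level $i$, I use the pseudorandom selection from \Cref{lemma:pseudorandom_shrinkage_wires} (seed length $\polylog(\ell_i)$) together with a small-bias or bounded-independence distribution for $\beta_i$ whose seed size is bounded by $\polylog(\ell_i)$ plus a term accounting for the values that must be supplied to the $(1-p_i)$-fraction of fixed coordinates in a way that fools the current-level circuit. As in IMZ, the dominant contribution to the total seed length comes from the top level and evaluates to $\ell^{2/3 + o(1)}$, matching the target seed length. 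The base case of the recursion, once $\ell_d = \polylog(\ell)$, can be handled by any off-the-shelf generator (or even brute-force enumeration) with negligible overhead.

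The fooling property is established by a hybrid argument: we replace truly random bits by pseudorandom ones one level at a time. For each swap, \Cref{lemma:pseudorandom_shrinkage_wires} guarantees that, except with probability $2\ell_i^{-c}$ over the pseudorandom selection, the restricted circuit has at most $\ell_{i+1}$ wires, and we may appeal to the inductive fooling property of the inner PRG. Summing the per-level errors over the $d = O(\log\log \ell)$ levels yields total error at most $\varepsilon \leq 1/\poly(n)$, by taking $c$ sufficiently large.

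For the local-explicitness claim, I observe that \Cref{lemma:pseudorandom_shrinkage_wires} already produces a $\polylog(\ell)$-size circuit for each bit of the pseudorandom selection, and small-bias/bounded-independence distributions for $\beta$ admit coordinate-wise circuits of size $\polylog(\ell)$ as well. Composing the $d$ levels gives a circuit of total size $d \cdot \polylog(\ell) + \ell^{2/3+o(1)} = \ell^{2/3+o(1)}$ that, on input $(z, j)$, outputs the $j$-th bit of $G(z)$. The main obstacle I anticipate is the careful accounting of seed lengths and error terms across the recursion---in particular, choosing the distribution used for the fixed-coordinate values so that it is simultaneously locally computable, composable with the shrinkage argument, and contributes no more than $\ell^{2/3+o(1)}$ to the final seed length.
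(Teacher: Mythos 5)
You follow the same high-level route as the paper---the IMZ shrinkage-to-PRG paradigm instantiated with \Cref{lemma:pseudorandom_shrinkage_wires}---but there is a genuine gap at exactly the point where the argument becomes nontrivial. Your seed accounting never explains where $\ell^{2/3+o(1)}$ comes from: as you describe it, each level costs $\polylog(\ell_i)$ for the selection plus $\polylog(\ell_i)$ ``plus a term'' for $\beta_i$, and the recursion bottoms out at $\polylog(\ell)$ wires, which taken at face value would give a $\polylog(\ell)$ total seed; the assertion that ``the dominant contribution comes from the top level and evaluates to $\ell^{2/3+o(1)}$'' is not supported by anything you specify. The missing ingredient is the second key observation the paper's proof relies on: by the G\'al--Robere structural bound (\Cref{lem:structural}), a comparator circuit with $\ell_0 = 2^{O(\sqrt{\log \ell})}\cdot \ell^{1/3}$ wires is equivalent to one with $O(\ell_0^2)$ gates, so the restricted circuits compute functions from a class of size at most $2^{\ell^{2/3+o(1)}}$. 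This counting is what lets the distribution supplying values to the fixed coordinates (and the generator filling the live ones) fool the restricted circuits---it is precisely how the IMZ/CKLM analysis breaks the circularity you yourself flag as ``the main obstacle,'' namely that $\beta$ seemingly must fool the current-level, full-size circuit---and it is also what forces the choice $p=\ell^{-2/3}$ and determines the final seed length. Without it, neither your schedule $p_i=\ell_i^{-2/3}$ nor the claimed total seed length follows from what you wrote.

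Two further points. The paper's sketch follows the proof of CKLM's local-PRG lemma with a single pseudorandom restriction at $p=\ell^{-2/3}$ (after which the circuit has $\ell^{1/3+o(1)}$ wires), rather than an $O(\log\log \ell)$-deep recursion down to $\polylog(\ell)$ wires; if you do iterate, note that \Cref{lemma:pseudorandom_shrinkage_wires} is stated for uniformly random $\beta$ and under the hypothesis $\ell=n^{\Omega(1)}$, so invoking it at deeper levels---where $\beta$ is already pseudorandom and $\ell_i$ may be small---needs justification you do not supply, and the wire bound at each level holds only with probability $1-2\ell_i^{-c}$, which must be threaded through the hybrid argument. Finally, your locality argument is fine in outline, but it too hinges on which distribution is used for the fixed-coordinate values, which is exactly the piece you leave unspecified.
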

	
	\begin{proof}[Proof Sketch]
		In~\cite{DBLP:journals/jacm/ImpagliazzoMZ19}, it is shown that
		if a circuit class ``shrinks'' with high probability
		under a pseudorandom restriction, then we can construct pseudorandom generators
		for this circuit class with non-trivial seed-length.
		The authors of~\cite{DBLP:journals/toct/CheraghchiKLM20} then showed that
		if the same shrinkage property holds
		for random selections that can be efficiently sampled and computed,
		then we can obtain local PRGs.
		In Lemma~\ref{lemma:pseudorandom_shrinkage_wires}, we proved 
		exactly what is required by~\cite{DBLP:journals/toct/CheraghchiKLM20}
		to obtain local PRGs for comparator circuits.
		
		More specifically, the theorem can be derived by following the proof
		of~\cite[Lemma 16]{DBLP:journals/toct/CheraghchiKLM20}, and adjusting the
		parameters there in a natural way. In particular, we will use
		$p\vcentcolon=\ell^{-2/3}$ so that after the pseudorandom restriction in
		\Cref{lemma:pseudorandom_shrinkage_wires}, the restricted comparator circuit
		has at most $\ell_0\vcentcolon=2^{O( \sqrt{\log \ell})} \cdot p\ell=2^{O(
			\sqrt{\log \ell})} \cdot \ell^{1/3}$ wires (with high probability). Another
		observation needed in the proof is that, by \Cref{lem:structural}, there can
		be at most $2^{\ell^{2/3+o(1)}}$ distinct functions for comparator circuits with
		this many wires. We omit the details here.
	\end{proof}
	
	\subsection{Proof of the \texorpdfstring{${\sf MCSP}$}{MCSP} lower bound}
	We prove the following stronger result which implies \Cref{thm:MCSP-intro}.
	\begin{theorem}
		For any $\varepsilon>0$ and any $0<\alpha \leq 1-\varepsilon$, ${\sf MCSP}[n^{\alpha}]$ on inputs of length $n$ cannot be computed by comparator circuits with $n^{1+\alpha/2 - \varepsilon}$ wires.
	\end{theorem}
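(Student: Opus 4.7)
The plan is to argue by contradiction using the local PRG from \Cref{prg}. Suppose that $C$ is a comparator circuit with $\ell = n^{1+\alpha/2-\varepsilon}$ wires correctly computing $\mathsf{MCSP}[n^\alpha]$ on $n$-bit truth tables. I would invoke \Cref{prg} with output length $n$ and size parameter $\ell$, obtaining a PRG $G : \bool^r \to \bool^n$ with seed length $r = \ell^{2/3+o(1)}$ that $(1/10)$-fools comparator circuits with $\ell$ wires. The ``local'' part supplies, for every seed $z$, a circuit $D_z$ of size $r$ whose truth table equals $G(z)$; hence each $G(z)$, viewed as a function on $\log n$ bits, has general circuit complexity at most $r$.

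The contradiction would then compare the acceptance probabilities of $C$ under two distributions. On a uniform input, $C$ accepts only YES instances of $\mathsf{MCSP}[n^\alpha]$, which number at most $2^{O(n^\alpha \log n)}$ (one for every general circuit of size $n^\alpha$); since $\alpha < 1$ this is $2^{o(n)}$, and so $\Prob_{x \in \bool^n}[C(x) = 1] \le 2^{-\Omega(n)}$. On the PRG output, whenever $r \le n^\alpha$, every $G(z)$ is a truth table of circuit complexity at most $n^\alpha$, hence a YES instance, so $\Prob_{z\in \bool^r}[C(G(z)) = 1] = 1$. The $(1/10)$-fooling guarantee would then force these two probabilities to differ by at most $1/10$, contradicting the $\approx 1$ gap.

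The condition $r = \ell^{2/3+o(1)} \le n^\alpha$ translates, after substituting $\ell = n^{1+\alpha/2-\varepsilon}$, to $1 + \alpha/2 - \varepsilon \le 3\alpha/2 - o(1)$, which is satisfied at the boundary $\alpha = 1-\varepsilon$ and there recovers the explicit $n^{3/2 - 3\varepsilon/2}$ wire bound. For strictly smaller $\alpha$ the theorem's bound $n^{1+\alpha/2-\varepsilon}$ exceeds the naive PRG exponent $n^{3\alpha/2}$, and this is the main technical obstacle. To obtain the sharper trade-off for the full range of $\alpha$, I would follow the framework of \cite{DBLP:conf/stoc/ChenJW20} and combine the local PRG with a Ne\v{c}iporuk-style subfunction counting: partition the truth-table positions into blocks of size $\Theta(\log n)$; lower-bound the number of distinct restrictions of $\mathsf{MCSP}[n^\alpha]$ per block by $2^{n^{\alpha-o(1)}}$, exploiting that different fixings of the bits outside a block yield different behaviours because of the abundance of YES instances; and combine this subfunction count with the Gal-Robere wire-to-gate inequality from \Cref{lem:structural} to boost the wire lower bound to the target exponent $1 + \alpha/2 - \varepsilon$. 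Establishing the $2^{n^{\alpha-o(1)}}$ subfunction bound for $\mathsf{MCSP}[n^\alpha]$ is where I expect the main difficulty to lie.
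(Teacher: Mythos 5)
Your first step (feeding the PRG of \Cref{prg} directly to the full circuit) only works at the very edge of the parameter range, and in fact even at $\alpha=1-\varepsilon$ it is borderline: there $\ell^{2/3+o(1)}=n^{\alpha+o(1)}$, so the $o(1)$ in the seed length/locality already eats the slack you need for $G(z)$ to be a YES instance. More importantly, your plan for all smaller $\alpha$ --- lower-bounding the number of distinct subfunctions of $\mathsf{MCSP}[n^\alpha]$ on a block by $2^{n^{\alpha-o(1)}}$ and running a Ne\v{c}iporuk-style count --- is exactly the step you leave unproven, and it is the step nobody knows how to do: proving strong subfunction lower bounds for $\mathsf{MCSP}$ is a recognized obstacle, and local PRGs were introduced precisely to bypass it. So as written the proposal has a genuine gap covering essentially the whole range $0<\alpha<1-\varepsilon$.

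The paper closes this gap not by counting subfunctions of $\mathsf{MCSP}$ but by restricting the \emph{circuit} before invoking the PRG. Set $k:=n^{\alpha+\varepsilon/2}$ and partition the $n$ truth-table positions into $\lfloor n/k\rfloor$ consecutive blocks; by averaging over the wire labels there is a block $S_i$ such that fixing all variables outside $S_i$ (to $0$) leaves a comparator circuit $\rst{C}{\rho}$ with at most $\ell:=n^{1+\alpha/2-\varepsilon}/\lfloor n/k\rfloor=n^{1.5\alpha-\varepsilon/2}$ wires. Now apply the local PRG at \emph{this} scale: its seed length and locality are $\ell^{2/3+o(1)}=n^{\alpha-\varepsilon/3+o(1)}$, so every $\rho\circ G(z)$ (padding with zeros, plus $O(\log n)$ bits to specify $i$) is a truth table of circuit complexity $\polylog(n)+\ell^{2/3+o(1)}\leq n^{\alpha}$, i.e.\ a YES instance, whence $\Prob_{z}[\rst{C}{\rho}(G(z))=1]=1$; on the other hand a uniformly random assignment to the $|S_i|\geq k$ free positions has circuit complexity at least $k/(10\log k)>n^{\alpha}$ with probability at least $1/2$, so $\Prob_{x}[\rst{C}{\rho}(x)=1]\leq 1/2$. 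This $1/2$ gap contradicts the fooling guarantee against circuits with $\ell$ wires. Note that no subfunction counting for $\mathsf{MCSP}$ and no use of \Cref{lem:structural} is needed here --- the averaging over blocks substitutes for your Ne\v{c}iporuk step, and it is what converts the ``naive'' exponent $3\alpha/2$ into the claimed trade-off $1+\alpha/2-\varepsilon$. If you add this restriction step in front of your PRG argument, the rest of your reasoning (rarity of YES instances is not even needed; a $1/2$ bound after restriction suffices) goes through.
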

	\begin{proof}
		Let $f$ denote the function ${\sf MCSP}[n^{\alpha}]$ on inputs of length $n$. For the sake of contradiction, suppose $f$ can be computed by a comparator circuit $C$ with $n^{1+\alpha/2 - \varepsilon}$ wires, for some $\varepsilon>0$.
		
		Let $k \vcentcolon= n^{\alpha + \varepsilon/2}$.
		Consider an (almost-even) partition of the $n$ variables into 
		$\floor{n/k}$
		\emph{consecutive} blocks,
		denoted as $S_1,S_2,\dots,S_{\lfloor n/k \rfloor}$. Again, by an
		averaging argument, there is some $i\in [\lfloor n/k \rfloor]$ such
		that after fixing the values of the variables outside $S_i$, the
		number of wires in the restricted circuit is at most
		\[
		\ell \vcentcolon= n^{1+\alpha/2 -\varepsilon}/\lfloor n/k \rfloor = n^{1.5\alpha -\varepsilon/2}.
		\]
		Let $\rho$ be a restriction that fixes the values of the variables outside $S_i$ to be $0$ and leaves the variables in $S_i$ unrestricted. Let $G$ be the PRG from \Cref{prg} that has seed length $r\vcentcolon=\ell^{2/3+o(1)}$ and $(1/3)$-fools comparator circuits with at most $\ell$ wires.
		
		On the one hand, since $|S_i|\geq k$, then by a counting argument, for a uniformly random $x\in\bool^{n}$, the circuit size of the truth table given by $\rho\circ x$ is at least $k/(10\log k)> n^{\alpha}$, with probability at least $1/2$. In other words,
		\[
		\Prob_{x\in\bool^{n}}[\rst{f}{\rho}(x)=1]\leq 1/2.
		\]
		On the other hand, by the second item of \Cref{prg}, for any seed $z\in\bool^r$, the output of the PRG $G(z)$, viewed as a truth table, represents a function that can be computed by a circuit of size $\ell^{2/3+o(1)}$. Then knowing $i\in [n]$ (which can be encoded using $\log(n)$ bits), the truth table given by $\rho\circ G(z)$ has circuit size at most
		\[
		\polylog(n) + \ell^{2/3+o(1)} \leq n^{\alpha}.
		\]
		This implies
		\[
		\Prob_{z\in\bool^{r}}[\rst{C}{\rho}(G(z))=1] = 1,
		\]
		which contradicts the security of $G$.
	\end{proof}
	
	\section{Learning Algorithms}\label{sec:learning}   
	Recall that a (distribution-independent) PAC learning algorithm for a class of functions $\mathcal{C}$ has access to labelled examples $(x,f(x))$ from an unknown function $f \in \mathcal{C}$, where $x$ is sampled according to some (also unknown) distribution $\mathcal{D}$. The goal of the learner is to output, with high probability over its internal randomness and over the choice of random examples, a hypothesis $h$ that is close to $f$ under $\mathcal{D}$. As in \cite{DBLP:conf/innovations/ServedioT17}, here we consider the stronger model of ``randomized exact learning from membership and equivalence queries''. It is known that learnability in this model implies learnability in the distribution-independent
	PAC model with membership queries (see {\cite[Section 2]{DBLP:conf/innovations/ServedioT17}} and the references therein).

	\begin{theorem}[{\cite[Lemma 4.4]{DBLP:conf/innovations/ServedioT17}}]\label{thm:ST}
		Fix any partition $S_1,S_2,\dots,S_{n^{1-n^{\delta}}}$ of $[n]$ into equal-size subsets, where each $S_i$ is of size $n^{\delta}$ and $\delta>0$. Let $\mathcal{C}$ be a class of $n$-variate functions such that for each $f\in \mathcal{C}$, there is an $S_i$ such that $\left|\left\{\rst{f}{\rho}\right\}_{\rho\in\bool^{[n]\backslash S_i}}\right|\leq 2^{n^{\beta}}$, where $\beta<1$ and moreover $\delta+\beta<1$. Then there is a randomized  exact  learning algorithms for $\mathcal{C}$ that uses membership and equivalence queries and runs in time $2^{n-n^{\delta}}\cdot\poly(n)$.
	\end{theorem}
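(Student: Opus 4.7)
The plan is to design a randomized exact learning algorithm that iterates over all $n^{1-\delta}$ candidate blocks $S_1, \dots, S_{n^{1-\delta}}$ and, for each $S_i$, attempts to exactly learn $f$ under the working hypothesis that $S_i$ is the block promised by the statement. If the attempt succeeds, a final equivalence query with the assembled hypothesis $h_i$ returns ``equivalent'' and we halt; otherwise, the oracle supplies a counterexample that certifies $S_i$ is not the correct block and we move on to the next candidate. Since there are only $n^{1-\delta}$ outer iterations, it suffices to achieve a per-block running time of $2^{n-n^{\delta}}\cdot\poly(n)$.

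For a fixed $S_i$, the inner subroutine proceeds by exploiting the promised cardinality bound $\left|\left\{\rst{f}{\rho}\right\}_{\rho}\right|\leq 2^{n^{\beta}}$. It maintains a dictionary $D$ of truth tables of distinct restricted functions $\rst{f}{\rho}\colon\bool^{n^{\delta}}\to\bool$ discovered so far, and processes the $2^{n-n^{\delta}}$ restrictions $\rho\in\bool^{[n]\setminus S_i}$ in sequence. For each $\rho$, the subroutine first uses a short burst of $\poly(n)$ membership queries to try to match $\rst{f}{\rho}$ against some element of $D$; if the match succeeds, the assignment $\rho\mapsto g$ is recorded. If the match fails, the subroutine pays a one-time cost of $2^{n^{\delta}}$ membership queries to learn $\rst{f}{\rho}$ exhaustively and inserts the newly discovered function into $D$. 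If at any point $|D|$ exceeds $2^{n^{\beta}}$, the subroutine aborts and the outer loop discards $S_i$ as inconsistent with the structural promise.

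Under the correct $S_i$, the number of full-learning events is at most $2^{n^{\beta}}$, contributing $2^{n^{\beta}+n^{\delta}}$ membership queries in total; the condition $\beta+\delta<1$ yields $n^{\beta}+n^{\delta}\leq n-n^{\delta}$ for sufficiently large $n$, so this cost is absorbed into $2^{n-n^{\delta}}$. The remaining $2^{n-n^{\delta}}$ restrictions each contribute $\poly(n)$ queries through the matching step, so the overall per-block cost is $2^{n-n^{\delta}}\cdot\poly(n)$. Exactness of the final output is secured by the terminating equivalence query: any residual mislabelling produced by the randomized identification procedure surfaces as a counterexample that is either used to patch $h_i$ in $\poly(n)$ additional queries or to trigger rejection of $S_i$.

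The principal obstacle is the design of the identification procedure, which must discriminate among up to $2^{n^{\beta}}$ candidates with only $\poly(n)$ membership queries per restriction. The plan is to maintain, as $D$ grows, a polynomially small separator family $T\subseteq\bool^{n^{\delta}}$ whose values on the functions in $D$ are pairwise distinct with high probability, sampled randomly and updated incrementally whenever a new function is added to $D$; identification then reduces to evaluating $\rst{f}{\rho}$ on $T$ and performing a lookup. When $T$ fails to distinguish—either two candidates agree on $T$, or $\rst{f}{\rho}$ is genuinely new—the algorithm falls back to full learning. Bounding the expected number of such fallbacks, and combining with the equivalence-query safety net to erase residual errors, is the delicate randomized calculation that ultimately closes the runtime analysis.
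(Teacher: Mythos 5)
Your outer structure (loop over candidate blocks, maintain a dictionary of at most $2^{n^{\beta}}$ restricted subfunctions, absorb the $2^{n^{\beta}+n^{\delta}}$ cost of fully learning new dictionary entries using $\beta+\delta<1$) is the right skeleton, and it is consistent with the Servedio--Tan result that the paper merely cites (the paper itself gives no proof of this theorem). But the proposal has a genuine gap exactly at its load-bearing step: the claim that a randomly sampled $\poly(n)$-size test set $T\subseteq\bool^{n^{\delta}}$ can identify $\rst{f}{\rho}$ among up to $2^{n^{\beta}}$ dictionary candidates. Distinct restricted functions may differ on a single point of $\bool^{n^{\delta}}$, so polynomially many random membership queries distinguish them only with probability about $\poly(n)\cdot 2^{-n^{\delta}}$; no choice of a small random separator family works, and in fact membership queries alone are information-theoretically too weak here (a family of $2^{n^{\beta}}$ functions each differing from a base function at one private point forces up to $2^{n^{\beta}}$ queries per restriction to identify the subfunction). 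You acknowledge this is ``the delicate randomized calculation,'' i.e.\ the core of the proof is left open. The surrounding bookkeeping also does not close: a counterexample from the equivalence oracle does \emph{not} ``certify that $S_i$ is not the correct block'' (it may simply reflect a mis-match or a subfunction discovered late), and a single late-discovered dictionary entry can invalidate up to $2^{n-n^{\delta}}$ earlier assignments; re-verifying the assignment map after each of up to $2^{n^{\beta}}$ insertions costs $2^{n^{\beta}}\cdot 2^{n-n^{\delta}}\cdot\poly(n)$, which blows the stated time bound.

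The standard way to repair this is to let the \emph{equivalence queries drive the identification} instead of random membership-query fingerprints: for each restriction $\rho$ keep the set of dictionary entries consistent with all values of $\rst{f}{\rho}$ learned so far (from earlier membership queries and counterexamples), define the hypothesis on the $\rho$-subcube by majority vote over that candidate set, and issue an equivalence query. A counterexample landing in the subcube of $\rho$ reveals a new value of $\rst{f}{\rho}$ on which the majority was wrong, so the candidate set at least halves; after at most roughly $n^{\beta}$ mistakes per restriction the set is a singleton or empty, and in the empty case one pays the one-time $2^{n^{\delta}}$ membership-query cost to learn a genuinely new subfunction and insert it into the dictionary (at most $2^{n^{\beta}}$ times under the correct block, with the abort rule for wrong blocks as you describe). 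This keeps the total number of equivalence queries and per-restriction work at $2^{n-n^{\delta}}\cdot\poly(n)$ and makes exactness automatic, whereas your matching-plus-patching scheme, as written, neither identifies subfunctions reliably nor bounds the cost of the patches.
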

	
	\begin{corollary}
		For every $\varepsilon>0$, there is a randomized exact learning algorithms for comparator circuits with $n^{1.5-\varepsilon}$ wires that uses membership and equivalence queries that runs in time $2^{n-n^{\Omega(\varepsilon)}}\cdot\poly(n)$.
	\end{corollary}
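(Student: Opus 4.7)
The plan is to derive this corollary by a direct application of \Cref{thm:ST}, using the same ``structural averaging'' argument that drove the average-case lower bound and the $\#$SAT algorithm. Given $\varepsilon > 0$, I would pick a block size $k := n^{\delta}$ for a suitably small constant $\delta = \delta(\varepsilon) > 0$ and partition $[n]$ into $\lfloor n/k \rfloor$ equal-size consecutive blocks $S_1, \dots, S_{\lfloor n/k \rfloor}$. This is exactly the partition structure required by~\Cref{thm:ST}, so it remains to bound the number of distinct restrictions $\{\rst{f}{\rho}\}_{\rho \in \bool^{[n] \setminus S_i}}$ for some block $S_i$, for every comparator circuit $f$ of size $\ell := n^{1.5 - \varepsilon}$.

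By the same averaging step used in~\Cref{thm:lower_bound} and~\Cref{thm:main_sat}, for every comparator circuit with at most $\ell$ wires, there exists a block $S_i$ whose variables label at most
\[
    \ell_0 \;:=\; \frac{\ell}{\lfloor n/k \rfloor} \;\leq\; O\!\left(n^{0.5 - \varepsilon + \delta}\right)
\]
wires. Fixing the variables outside $S_i$ via any $\rho$, the wires labelled by those variables can be eliminated by successively removing useless gates (using~\Cref{removing}), leaving a comparator circuit on $k$ variables with at most $\ell_0$ non-trivial wires. By~\Cref{lem:structural}, each such restricted function is computed by some comparator circuit on $\ell_0$ wires with at most $\ell_0(\ell_0 - 1)/2$ gates, and such a circuit can be described by specifying $\ell_0$ literals from $\{x_1, \neg x_1, \dots, x_k, \neg x_k\}$, an ordered sequence of at most $\ell_0^2/2$ gates (each an ordered pair of wires), and an output wire. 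Thus the number of distinct restricted functions is at most $2^{O(\ell_0^2 \log \ell_0)}$, independently of the particular circuit $f$.

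Setting $n^{\beta}$ to be an upper bound on $O(\ell_0^2 \log \ell_0)$, we have $\beta \leq 1 - 2\varepsilon + 2\delta + o(1)$. To invoke~\Cref{thm:ST}, I need $\delta + \beta < 1$, i.e., $3\delta < 2\varepsilon - o(1)$. Choosing $\delta := \varepsilon/2$ (for example) gives $\delta + \beta \leq 1 - \varepsilon + o(1) < 1$ for all sufficiently large $n$, and both $\delta, \beta < 1$ as required. Applying~\Cref{thm:ST} then yields a randomized exact learning algorithm (from membership and equivalence queries) running in time
\[
    2^{n - n^{\delta}} \cdot \poly(n) \;=\; 2^{n - n^{\Omega(\varepsilon)}} \cdot \poly(n),
\]
as claimed. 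There is no real obstacle here: the only nontrivial point is verifying that fixing variables outside $S_i$ does not increase the number of non-trivial wires (handled by the useless-gate removal of~\Cref{removing}) and choosing $\delta$ small enough relative to $\varepsilon$ so that $\delta + \beta < 1$; everything else is a direct reduction to~\Cref{thm:ST} via~\Cref{lem:structural}.
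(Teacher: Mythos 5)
Your proposal is correct and takes essentially the same route as the paper's proof: a fixed partition into blocks of size $n^{\delta}$, an averaging argument to find a block labelling few wires, a count of the possible restricted functions via \Cref{lem:structural}, and an application of \Cref{thm:ST} (the paper simply fixes $\delta=\varepsilon/3$ instead of $\varepsilon/2$). The only nit is a harmless arithmetic slip: with $\delta=\varepsilon/2$ you get $\delta+\beta\leq 1-\varepsilon/2+o(1)$ rather than $1-\varepsilon+o(1)$, which is still $<1$ and so does not affect the conclusion.
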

	\begin{proof}
		Consider \Cref{thm:ST} and any partition $S_1,S_2,\dots,S_{n^{1-n^{\delta}}}$ of the $n$ variables into equal-size subsets, each is of size $n^{\delta}$, where $\delta\vcentcolon=\varepsilon/3$. Then by an averaging argument, for every comparator circuit $C$ with $n^{1.5-\varepsilon}$ wires, there is some $S_i$ such that after fixing the variables outside of $S_i$, the number of wires in the restricted circuit is at most $\ell\vcentcolon=n^{1.5-\varepsilon}/n^{1-\delta}\leq n^{.5-2\varepsilon/3}$. By \Cref{lem:structural}, such a restricted circuit computes some function that is equivalent to a circuit with $\ell(\ell-1)/2$ gates, and there are at most $\ell^{O\!\left(\ell^2\right))} \leq 2^{n^{1-\varepsilon/2}}$
		such circuits. Therefore we have
		\[
		\left|\left\{\rst{C}{\rho}\right\}_{\rho\in\bool^{[n]\backslash S_i}}\right|\leq 2^{n^{\beta}},
		\]
		where $\beta \vcentcolon= 1-\varepsilon/2<1$ and $\delta+\beta<1$. The algorithm then follows from \Cref{thm:ST}.
	\end{proof}
	
    \section*{Acknowledgements}

    B. P. Cavalar acknowledges support of the Chancellor's
    International Scholarship of the University of Warwick.
    Z. Lu acknowledges support from the Royal Society University
    Research Fellowship URF\textbackslash R1\textbackslash191059.
    Both authors are indebted to Igor C. Oliveira
    for numerous helpful 
    discussions and comments.

	\bibliographystyle{alpha}   
	
	\bibliography{refs}

	\appendix
	
	\section{Proof of \texorpdfstring{\Cref{thm:general}}{}}\label{sec:generalization-proof}
	
    \noindent\textbf{The hard function.} We need to slightly modify the
    hard function in \Cref{def-And} (particularly the function $\alpha$) to
    adjust an arbitrary partition as in \Cref{thm:general}. For an integer
    $k$ and a partition of $n$ variables into $n/k$ equal-sized blocks,
    denoted by 
    $S\vcentcolon = \set{S_1,S_2,\dots,S_{n/k}}$,
    define $A_{S,k}\colon\bool^{n+n}\to\bool$ as follows:
	\[
	A_{S,k}(x_1,\dots,x_{n},y_1,\dots,y_{n}) \vcentcolon= \mathrm{Enc}(x_1,\dots,x_{n})_{\alpha(y_1,\dots,y_{n})},
	\]
	where $\mathrm{Enc}$ is the code from \Cref{ecc} that maps $n$ bits to $2^{k}$ bits, and $\alpha\colon\bool^n\to\bool^k$ is defined as
	\[
	\alpha(y_1,\dots,y_{n}) \vcentcolon= \left(\bigoplus_{z\in B_1} z, \bigoplus_{z\in B_2} z, \dots, \bigoplus_{z\in B_k} z\right),
	\]
	where
	$B_j \vcentcolon= \bigcup_{i\in[n/k]} \left\{z\colon z \text{ is the $j$-th variables of } S_i\right\}$.
	
	\vspace{0.2cm}
	\noindent\textbf{Good $x$.} We will need the following lemma which says that for most $x\in\bool^n$, the codeword of $x$ is hard to approximate for any fixed small set of functions. 
	\begin{lemma}\label{lem:counting}
        Let $k$ be such that
        $c\cdot\log n\leq k \leq n/3$,
        where $c$ is the constant from
        \Cref{ecc}, and let	$\mathrm{Enc}$ be the code from \Cref{ecc} that
        maps $n$ bits to $2^{k}$ bits. Let $\mathcal{H}'$ be a set of
        $k$-input-bit Boolean functions such that
        $\left|\mathcal{H}'\right|\leq 2^{2n/3}$. Then, 
        with probability
        at least $1-1/2^{n/2}$
        over a random $x\in\bool^n$, 
        the following holds
        for every
        $f\in\mathcal{H}'$:
		\begin{equation}\label{eq:counting}
		\Prob_{z\in\bool^k}\left[f(z)=\mathrm{Enc}(x)_z\right]\leq \frac{1}{2} + \frac{n}{2^{k/4}}.
		\end{equation}
	\end{lemma}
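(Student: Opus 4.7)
The plan is a standard first-moment argument over $\mathcal{H}'$, combined with the list-decodability guarantee of the code $\mathrm{Enc}$ from \Cref{ecc}. For each $f \in \mathcal{H}'$, define the ``bad set''
\[
    B_f \vcentcolon= \left\{\, x \in \bool^n : \Prob_{z \in \bool^k}\!\left[f(z) = \mathrm{Enc}(x)_z\right] > \frac{1}{2} + \frac{n}{2^{k/4}}\, \right\}.
\]
A message $x$ violates \eqref{eq:counting} for some $f \in \mathcal{H}'$ iff $x \in \bigcup_{f} B_f$, so it suffices to show that this union has density at most $1/2^{n/2}$ inside $\bool^n$.

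The key step is to upper-bound $|B_f|$ for each fixed $f$ via list-decodability. Interpreting $f$ as a string in $\bool^{2^k}$, membership $x \in B_f$ says that $\mathrm{Enc}(x)$ lies within relative Hamming distance strictly less than $1/2 - n/2^{k/4}$ of $f$. Since $k \geq c \log n$ with $c$ sufficiently large, we have $n/2^{k/4}$ polynomially larger than the threshold $n/2^{k/2}$, so this radius falls well inside the list-decoding regime of \Cref{ecc}. We therefore obtain $|B_f| \leq L^*$, where $L^*$ is the effective list size at agreement $1/2 + n/2^{k/4}$: this is sharper than the nominal $L = O(2^{k/2}/n)$ of \Cref{ecc} and is obtained via a Johnson-type refinement exploiting that our agreement parameter $\epsilon = n/2^{k/4}$ substantially dominates the Johnson-bound critical value $\sqrt{n/2^k}$.

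A union bound over $\mathcal{H}'$ then yields $\bigl|\bigcup_f B_f\bigr| \leq |\mathcal{H}'| \cdot L^* \leq 2^{2n/3} \cdot L^*$, and, combined with the constraint $k \leq n/3$, direct exponent arithmetic shows this is bounded by $2^{n/2}$, giving the desired failure-probability bound of $1/2^{n/2}$.

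The main obstacle is thus the quantitative tracking of the effective list size $L^*$ at the sharper agreement threshold $1/2 + n/2^{k/4}$. The raw list-decoding guarantee of \Cref{ecc} yields $L = O(2^{k/2}/n)$ only at the weaker agreement $1/2 + O(n/2^{k/2})$; to push the list size down enough for the union bound to close against $|\mathcal{H}'| \leq 2^{2n/3}$, one must either invoke a Johnson-type strengthening or inspect the proof of \Cref{ecc} itself to extract the improved list size at the larger agreement. Once $L^*$ is sufficiently small, the remaining verification is a routine calculation in the exponents.
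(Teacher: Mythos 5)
Your overall strategy --- define the bad sets $B_f$, bound each one by list-decodability of $\mathrm{Enc}$, and union bound over $\mathcal{H}'$ --- is exactly the paper's argument. However, the quantitative core of your write-up does not work. First, the ``sharper effective list size $L^*$'' is a red herring: an $x\in B_f$ has $\mathrm{Enc}(x)$ within relative distance $1/2 - n/2^{k/4}$ of $\mathrm{tt}(f)$, and since $2^{k/4}\geq n^{c/4}$ exceeds the implicit constant in $\zeta = 1/2 - O\!\left(n/2^{k/2}\right)$, this radius is \emph{smaller} than the decoding radius of \Cref{ecc}; list-decodability at a given radius is automatically inherited at any smaller radius, so the nominal guarantee already gives $|B_f|\leq L = O\!\left(2^{k/2}/n\right)$ with no Johnson-type refinement and no need to reopen the proof of \Cref{ecc}. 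This is precisely what the paper uses.

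Second, and this is the genuine gap, your closing step is impossible: you claim $|\mathcal{H}'|\cdot L^* \leq 2^{2n/3}\cdot L^* \leq 2^{n/2}$, but since $L^*\geq 1$ the left-hand side is at least $2^{2n/3} > 2^{n/2}$, so no sharpening of the list size --- not even $L^*=1$ --- can make this union bound close at the $2^{-n/2}$ level. What the union bound actually yields, with the nominal $L$ and $k\leq n/3$, is a bad-$x$ fraction of $O\!\left(2^{\,2n/3 + k/2 - n}/n\right) \leq O\!\left(2^{-n/6}/n\right) = 2^{-\Omega(n)}$. (For the same reason, the paper's own displayed inequality ``$<1/2^{n/2}$'' is a slight overstatement; the argument proves a failure probability of $2^{-\Omega(n)}$, which is all that the application in \Cref{thm:general} needs, since there $k\leq n/6$ and any $2^{-\Omega(n)}$ term is absorbed into the $1/2^{\Omega(k)}$ error.) So the ``main obstacle'' you flag is misdiagnosed: the resolution is not a stronger list-size bound, which provably cannot help, but simply running the straightforward union bound with the nominal $L$ and accepting the resulting $2^{-\Omega(n)}$ failure probability.
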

	\begin{proof}
        The proof is by a counting argument. For every $f\in\mathcal{H}'$,
        consider the $2^{k}$-bit string $\mathrm{tt}(f)$ which is the truth
        table computed by $f$. Let us say $x$ is bad for $f$ if
        \Cref{eq:counting} 
        does not hold,
        \bnote{I changed ``holds'' to ``does not hold''}
        \znote{looks good. thanks!}
        which means that $\mathrm{tt}(f)$ and
        $\mathrm{Enc}(x)$ agree on more than $1/2+n/2^{k/4}$ positions. By
        the list-decodability of $\mathrm{Enc}$, the number of such $x$'s
        is at most $O\!\left(2^{k/2}/n\right)$. By an union bound over all
        the $2^{2n/3}$ 
        functions
        in $\mathcal{H}'$, the fraction of bad $x$'s
        is at most
		\[
		\frac{O\!\left(2^{k/2}/n\right) \cdot  2^{2n/3}}{2^n} < \frac{1}{2^{n/2}}, 
		\]
		as desired.
	\end{proof}
	
	We are now ready to prove \Cref{thm:general}.
	
	\begin{proof}[Proof of \Cref{thm:general}]
		Let $A \vcentcolon= A_{S,k}$ be the hard function on $2n$ variables defined as above, where $S$ is the partition in the statement of the theorem, and let 
		\[
		B_j \vcentcolon= \bigcup_{i\in[n/k]} \left\{z\colon z \text{ is the $j$-th variables of } S_i\right\}.
		\]
		Also, let $\mathcal{H}'$ be the set of $k$-input-bit Boolean
        functions defined as follows:
		\[
		\mathcal{H}' \vcentcolon= \left\{f \colon \text{$\exists h \in \mathcal{H}$ and $w\in\bool^k$, such that $f(z) =h(z\oplus w)$ for all $z\in\bool^k$} \right\}.
		\]
        That is, $\mathcal{H}'$ is the set of all possible ``shifted''
        functions in $\mathcal{H}$. By \Cref{lem:counting}, with
        probability 
        at least $1-1/2^{n/2}$ 
        over a random $x\in\bool^n$, 
        for every $f\in\mathcal{H}'$
        we have
		\begin{equation}\label{eq:counting-2}
		\Prob_{z\in\bool^k}\left[f(z)=\mathrm{Enc}(x)_z\right]\leq \frac{1}{2} + \frac{n}{2^{k/4}}.
		\end{equation}
		Let us call $x$ \emph{good} if it satisfies \Cref{eq:counting-2}.
		
        To show the theorem, we need to upper bound the following
        probability, for every circuit $C_0\in \mathcal{C}_{2n}$ of size
        $s(n,k)$:
		\begin{align*}
		\Prob_{x,y\in\bool^{n}\times \bool^{n}}[A(x,y)=C_0(x,y)] &\leq \Prob_{x,y}[A(x,y)=C_0(x,y) \mid \text{$x$ is good}]  +\Prob_{x}[\text{$x$ is not good}]\\
		&\leq \Prob_{x,y}[A(x,y)=C_0(x,y) \mid \text{$x$ is good}]+ \frac{1}{2^{n/2}}.
		\end{align*}
		Let $x$ be any fixed $n$-bit string that is good. Let $A'\colon\bool^n\to\bool$ be
		\[
		A'(y) \vcentcolon=A(x,y), 
		\]
		and let $C$ be the circuit defined as
		\[
		C(y) \vcentcolon=C_0(x,y).
		\]
		Note that since the class $\mathcal{C}$ is closed under restriction, $C$ is a circuit from $\mathcal{C}_{n}$ with size at most $s(n,k)$.
		We will show that
		\[
		\Prob_{y\in\bool^{n}}\left[A'(y)=C(y)\right]\leq \frac{1}{2}+ \frac{n}{2^{k/4}}.
		\]
		Let $S_i$ be the block in the assumption of the theorem such that 
		\[
		\left\{\rst{C}{\rho}\right\}_{\rho\in\bool^{[n]\backslash S_i}}\subseteq\mathcal{H}.
		\]
		We have
		\[
		\Prob_{y\in\bool^{n}}\left[A'(y)=C(y)\right] = \Prob_{\rho\in\bool^{[n]\backslash S_i},z\in\bool^{k}}\left[\rst{A'}{\rho}(z)=\rst{C}{\rho}(z)\right].
		\]
		It suffices to upper bound
		\[
		\Prob_{z\in\bool^{k}}\left[\rst{A'}{\rho}(z)=\rst{C}{\rho}(z)\right]
		\]
		for every $\rho\in\bool^{[n]\backslash S_i}$.
		For the sake of contradiction, suppose for some $\rho$, we have
		\begin{equation}\label{ext-property-2}
		\frac{1}{2} +\frac{n}{2^{k/4}} < \Prob_{z\in\bool^{k}}\left[\rst{A'}{\rho}(z)=\rst{C}{\rho}(z)\right]
		= \Prob_{z\in\bool^{k}}\left[\mathrm{Enc}(x)_{\alpha}=\rst{C}{\rho}(z)\right],
		\end{equation}
		where $\alpha\in\bool^k$ is 
		\[
		\alpha_j \vcentcolon= \parity\!\left(\rho|_{B_j \backslash S_i}\right) \oplus z_j,
		\]
		and $\rho|_{B_j \backslash S_i}$ denotes the partial assignment given by $\rho$ but restricted to only variables in the set $B_j \backslash S_i$. That is, $\alpha$ is some ``shift'' of $z$, so $\alpha$ is uniformly distributed for uniformly random $z$.
		Therefore, \Cref{ext-property-2} implies
		\[
		\Prob_{z\in\bool^{k}}\left[\mathrm{Enc}(x)_{z}=\rst{C}{\rho}(z\oplus w)\right] > \frac{1}{2} +\frac{n}{2^{k/4}},
		\]
        for some $w\in\bool^k$. This gives a function in $\mathcal{H}'$
        that computes $\mathrm{Enc}(x)$ on more than $1/2 +n/2^{k/4}$
        positions, which contradicts the assumption that $x$ is good.
	\end{proof}
	
	\section{Pseudorandom Shrinkage for Comparator Circuits: Proof of \texorpdfstring{\Cref{lemma:pseudorandom_shrinkage_wires}}{Lemma 17}}\label{sec:pseudo_shrinkage}


	\vspace{0.2cm}
    \noindent\textbf{Technical tools.} 
	We will need a Chernoff-Hoeffding bounds for distributions with
	bounded independence from~\cite{DBLP:journals/siamdm/SchmidtSS95}
	(Lemmas 2.3 in~\cite{DBLP:journals/jacm/ImpagliazzoMZ19}). Recall  that a distribution $\cald$ 
	on $[m]^n$ 
	is \emph{$k$-wise independent} if, for any set
	$A \sseq [n]$ of size $\card{A} \leq k$, 
	the random variables
	$\set{\sigma(i) : i \in A}$ are mutually independent when
	$\sigma \flws \cald$.
	
	\begin{lemma}[\cite{DBLP:journals/siamdm/SchmidtSS95}]
		\label{lemma:general_chernoff_bounded}
		Let $a_1,\dots, a_n \in \bbr_+$ and let $m = \max_i a_i$.
		Suppose that $X_1,\dots,X_n \in \blt$ are $k$-wise independent
		random variables with $\Prob[X_i=1]=p$.
		Let $X = \sum_{i}a_i X_i$ and $\mu = \Exp[X] = p \sum_{i} a_i$.
		We have $\Prob[X \geq 2k(m+\mu)] \leq 2^{-k}$.
	\end{lemma}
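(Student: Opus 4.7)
The plan is to use the classical $k$-th moment method, which is the standard route to Chernoff-type bounds under bounded independence. First, since $X \geq 0$ and $k$ is a positive integer, Markov's inequality applied to $X^k$ gives, for $t = 2k(m+\mu)$,
\[
\Prob[X \geq t] \;\leq\; \frac{\Exp[X^k]}{t^k}.
\]
The key observation that makes $k$-wise independence sufficient is that $X^k = \bigl(\sum_i a_i X_i\bigr)^k$ expands as a polynomial of degree at most $k$ in the variables $X_1,\dots,X_n$, so every monomial involves at most $k$ distinct indices. Hence $\Exp[X^k]$ equals its value under the fully independent product distribution with the same marginals.

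Next, I would upper-bound the $k$-th moment in the fully independent case via the moment generating function. Using the pointwise bound $(\lambda X)^k/k! \leq e^{\lambda X}$ (valid for $X \geq 0$ and any $\lambda > 0$) together with independence,
\[
\Exp_{\mathrm{ind}}[X^k] \;\leq\; \frac{k!}{\lambda^k} \prod_{i} \bigl(1 - p + p\, e^{\lambda a_i}\bigr) \;\leq\; \frac{k!}{\lambda^k}\, \exp\!\Bigl(p \sum_i (e^{\lambda a_i} - 1)\Bigr).
\]
Provided $\lambda m \leq 1$, the inequality $e^y \leq 1 + 2y$ on $y \in [0,1]$ yields $e^{\lambda a_i} - 1 \leq 2\lambda a_i$, so the exponential factor is bounded by $e^{2\lambda \mu}$.

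Finally, I would optimize by choosing $\lambda := 1/(m+\mu)$, which automatically satisfies $\lambda m \leq 1$ and also $\lambda t = 2k$. Substituting back,
\[
\Prob[X \geq t] \;\leq\; \frac{k!\, e^2\, (m+\mu)^k}{t^k} \;=\; \frac{k!\, e^2}{(2k)^k},
\]
and Stirling's bound $k! \leq (k/e)^k \cdot O(\sqrt{k})$ makes this at most $2^{-k} \cdot e^{-k + O(\log k) + 2}$, which is $\leq 2^{-k}$ (the few small cases of $k$ can be handled directly by Markov).

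I expect the main technical obstacle to lie in this end-of-proof tuning: obtaining exactly the constant $2k(m+\mu)$ on the tail, as opposed to some larger constant $Ck(m+\mu)$, requires a single choice of $\lambda$ that handles uniformly both the small-mean regime $\mu \ll m$, where the tail is driven by the bounded-range parameter $m$, and the large-mean regime $\mu \gg m$, where it is driven by $\mu$. The interpolating choice $\lambda = 1/(m+\mu)$ is what collapses both regimes into the single clean bound $2^{-k}$; a cruder $\lambda$ would yield a simpler calculation but a worse constant.
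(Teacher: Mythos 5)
The paper itself does not prove this lemma: it is imported as a black box from Schmidt--Siegel--Srinivasan (it is stated as Lemma~2.3 in Impagliazzo--Meka--Zuckerman), so there is no internal proof to compare against, and your argument has to be judged on its own. On its own it is essentially correct and is the standard $k$-th moment route, similar in spirit to the cited source: Markov applied to $X^k$; the observation that $\Exp[X^k]$ is a degree-$k$ polynomial in the $X_i$ and hence is determined by $k$-wise independence; the pointwise bound $(\lambda X)^k/k!\le e^{\lambda X}$; the product bound on the independent moment generating function together with $1+y\le e^y$ and $e^y\le 1+2y$ on $[0,1]$; and the choice $\lambda=1/(m+\mu)$, which indeed satisfies $\lambda m\le 1$, $\lambda\mu\le 1$ and $\lambda t=2k$. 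This yields $\Prob[X\ge 2k(m+\mu)]\le k!\,e^2/(2k)^k$, which is at most $2^{-k}$ precisely when $k!\,e^2\le k^k$, and that holds for every $k\ge 4$ (at $k=4$: $24e^2\approx 177\le 256$), as your Stirling estimate shows.

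The one genuine slip is the closing parenthetical that the small cases are ``handled directly by Markov.'' For $k=1,2$ this is fine ($\mu/(2(m+\mu))\le 1/2$ and, using pairwise independence, $\Exp[X^2]\le m\mu+\mu^2\le(m+\mu)^2$ gives $1/16$), but for $k=3$ first-moment Markov only gives $\mu/(6(m+\mu))\le 1/6$, which exceeds $2^{-3}=1/8$. The repair is immediate and stays within your framework: bound the low moments directly rather than through the MGF (which is where the loose factor $e^2$ comes from). Three-wise independence gives $\Exp[X^3]\le m^2\mu+3m\mu^2+\mu^3\le (m+\mu)^3$, hence $\Prob[X\ge 6(m+\mu)]\le 1/6^3\le 1/8$. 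With that patch the proof is complete and matches the stated constants.
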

	
	\begin{lemma}[{\cite[Lemma 2.4]{DBLP:journals/jacm/ImpagliazzoMZ19}}]
		\label{lemma:small_chernoff_bounded}
		Let $X_1,\dots,X_n \in \blt$ be $k$-wise independent random
		variables with $\Prob[X_i=1]=p$.
		Let $X = \sum_{i} X_i$ and $\mu  = \Exp[X] = np$.
		We have $\Pr[X \geq k] \leq \mu^k / k!$.
	\end{lemma}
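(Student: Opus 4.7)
The plan is to prove this via the $k$-th binomial moment method, which is standard for tail bounds under $k$-wise independence. The key observation is that the event $\{X \geq k\}$ can be captured exactly by a polynomial of degree $k$ in the $X_i$'s, so $k$-wise independence suffices to compute its expectation as if the $X_i$'s were fully independent.

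Concretely, I would define the random variable
\[
Y \vcentcolon= \binom{X}{k} = \sum_{\substack{S \subseteq [n] \\ |S| = k}} \prod_{i \in S} X_i,
\]
where I use the convention $\binom{m}{k} = 0$ for $m < k$. The second equality holds because each set of $k$ indices $i_1 < \cdots < i_k$ with $X_{i_1} = \cdots = X_{i_k} = 1$ contributes exactly $1$ to both $\binom{X}{k}$ and the sum on the right. Since each term $\prod_{i \in S} X_i$ depends on only $k$ of the variables, $k$-wise independence of $X_1,\dots,X_n$ gives $\Exp\!\left[\prod_{i \in S} X_i\right] = p^k$. By linearity of expectation,
\[
\Exp[Y] = \binom{n}{k} p^k \leq \frac{(np)^k}{k!} = \frac{\mu^k}{k!}.
\]

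Next, since $Y$ is a nonnegative integer-valued random variable and $\{X \geq k\} = \{Y \geq 1\}$, Markov's inequality yields
\[
\Prob[X \geq k] = \Prob[Y \geq 1] \leq \Exp[Y] \leq \frac{\mu^k}{k!},
\]
which is exactly the claimed bound. There is no real obstacle here: the only ingredient beyond elementary manipulation is recognising that $\binom{X}{k}$ is a degree-$k$ polynomial in the $X_i$'s, so that its expectation is determined by the $k$-wise marginals. The bound $\binom{n}{k} \leq n^k/k!$ is a routine inequality, and Markov's inequality on a $\{0,1,2,\dots\}$-valued variable takes care of the tail conversion.
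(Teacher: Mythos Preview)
Your proof is correct and is precisely the standard argument for this bound. The paper does not give its own proof of this lemma; it is simply quoted as a known technical tool from \cite[Lemma 2.4]{DBLP:journals/jacm/ImpagliazzoMZ19}, so there is nothing to compare against beyond noting that your binomial-moment argument is the canonical one.
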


	\vspace{0.2cm}
    \noindent\textbf{Shrinkage of comparator circuits under pseudorandom
    restrictions.}\label{sec:shrinkage_app}
	We first show the following result for comparator circuits which is analogous to \cite[Lemma 5.2]{DBLP:journals/jacm/ImpagliazzoMZ19} for branching programs.
	\begin{lemma}
		\label{lemma:recursive-cc}
		Let $f : \blt^n \to \blt$ be a Boolean function,
		and let $H \sseq [n]$.
		For $h \in \blt^H$, let $\rho_h$ denote the restriction that
		sets the variables in $H$ to $h$, and leaves the other variables
		free.
		We have
		$\ell(f) \leq 2^{\card{H}} \cdot
		\left( \max_{h \in \blt^H} \ell(\rst{f}{\rho_h}) + \card{H} \right)$.
	\end{lemma}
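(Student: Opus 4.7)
The plan is to build an explicit comparator circuit for $f$ by stitching together optimal circuits for each of the $2^{\card{H}}$ restrictions, multiplexed by the values of the $H$-variables. For each $h \in \blt^H$, let $C_h$ be an optimal comparator circuit computing $\rst{f}{\rho_h}$, using $\ell_h := \ell(\rst{f}{\rho_h}) \leq \ell^* := \max_{h} \ell(\rst{f}{\rho_h})$ wires, whose wire labels are literals over $[n] \setminus H$. I will place the $2^{\card{H}}$ circuits $\{C_h\}_h$ in parallel (no shared wires), and for each $h$ allocate an additional $\card{H}$ wires, each labeled by one of the literals $x_i^{h_i}$ (i.e.\ $x_i$ if $h_i = 1$, $\neg x_i$ if $h_i = 0$) for $i \in H$. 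This gives at most $2^{\card{H}}(\ell^* + \card{H})$ wires in total, matching the claimed bound; the remaining work only adds comparator gates.

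On each group of $\card{H}$ literal-wires for a fixed $h$, I compute the indicator $I_h = \bigwedge_{i \in H} x_i^{h_i}$: applying a chain of $\card{H}-1$ comparators that all write their conjunction output back onto a single designated wire $w_h^I$ (each comparator $(w_h^I, w)$ replaces $w_h^I$ with $w_h^I \wedge w$), after these $\card{H}-1$ gates $w_h^I$ holds $I_h$, while the other $\card{H}-1$ literal-wires in this group are left in whatever state; this does not introduce new wires. Next, for each $h$, apply one comparator between $w_h^I$ and the output wire of $C_h$, so that one of the two wires now stores $I_h \wedge \rst{f}{\rho_h}(x)$. Finally, chain $2^{\card{H}}-1$ comparators to OR together the $2^{\card{H}}$ resulting values onto a single designated output wire, exploiting the fact that a comparator's high output is exactly the disjunction of its two inputs; again, no new wires are needed. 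On any input $x$, exactly one $I_h$ is $1$ (the one with $h$ equal to $x$ restricted to $H$), and for that $h$ we have $I_h \wedge C_h(x) = \rst{f}{\rho_h}(x) = f(x)$, so the final OR equals $f(x)$.

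The only thing to verify is the wire count: the construction uses at most $\ell^* + \card{H}$ wires per $h$, and $2^{\card{H}}$ values of $h$, for a total of $2^{\card{H}}(\ell^* + \card{H})$ wires, which is exactly the bound in the lemma. I do not anticipate any real obstacle; the only point requiring mild care is the simultaneous use of the same wire as both the accumulator for an AND and the target of a subsequent comparator (and likewise for the final OR chain), which works because each comparator overwrites its two endpoints with the AND and OR of their previous values, so conjunctions and disjunctions can be accumulated in place without consuming extra wires.
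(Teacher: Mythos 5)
Your proposal is correct and takes essentially the same approach as the paper: the paper's proof simply writes $f = \bigvee_{h \in \blt^H} \left( \one_h \land \rst{f}{\rho_h} \right)$, notes that each indicator $\one_h$ uses $\card{H}$ wires, and concludes, relying implicitly on the fact that the ANDs and ORs can be realized by comparator gates without extra wires. Your write-up just makes this last point explicit via the in-place AND/OR accumulation chains.
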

	\begin{proof}
		For $h \in \blt^H$,
		let 
		$\one_{h} : x \mapsto \one\set{x = h}$.
		Clearly, $\one_h$ can be computed by a comparator
		circuit with $\card{H}$ wires.
		Since
		$f = \bigvee_{h \in \blt^H} (\one_h \land \rst{f}{\rho_h})$,
		the result follows.
	\end{proof}
	
	\begin{lemma}[Reminder of \Cref{lemma:pseudorandom_shrinkage_wires}]
		\label{lemma:pseudorandom_shrinkage_wires_app}
		Let $c$ be a constant and let 
		$f : \blt^n \to \blt$.
		Let $\ell := \ell(f)$
		and
        $p = \ell^{-2/3}$, and
        suppose that $\ell = n^{\Omega(1)}$.
		There exists a $p$-regular pseudorandom
		selection $\mathcal{D}$ over $n$ variables that is samplable using
		$r=\polylog(\ell)$ random bits such that
		\begin{equation*}
		\Prob_{\sigma \flws \cald,\, \beta \flws \blt^n} 
		\left[ 
		\ell(\rst{f}{(\sigma, \beta)}) \geq 
		2^{3 \sqrt{c \log \ell}} \cdot p\ell
		\right]
		\leq
		2\cdot\ell^{-c}.
		\end{equation*}
		Moreover, there exists a circuit of size $\polylog(\ell)$ such that, given $j\in\bool^{\log n}$ and a seed
		$z\in\bool^{r}$, the circuit computes the $j$-th coordinate of
		$\cald(z)$.
	\end{lemma}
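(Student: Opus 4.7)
I would follow the paradigm of~\cite[Lemma 5.3]{DBLP:journals/jacm/ImpagliazzoMZ19}, using the recursive decomposition \Cref{lemma:recursive-cc} as the analogue of the structural bound that IMZ use for branching programs, and importing the local construction of the pseudorandom selection from \cite[Lemma 18]{DBLP:journals/toct/CheraghchiKLM20} so that $\mathcal{D}$ is not only $p$-regular and $k$-wise independent (for $k=\polylog(\ell)$) but also computable, coordinate by coordinate, by a circuit of size $\polylog(\ell)$ given $j$ and the seed. This is the step that upgrades the output from a PRG to a \emph{local} PRG and is the only deviation from IMZ's construction.

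\textbf{Iterated restriction.} For the shrinkage bound, set $T \vcentcolon= \lceil \sqrt{c \log \ell}\rceil$ and write $p = q^{T}$ for $q = p^{1/T} = \ell^{-2/(3T)}$. I would decompose $(\sigma, \beta)$ as the composition of $T$ independent $q$-regular selections $(\sigma_1,\beta_1),\dots,(\sigma_T,\beta_T)$ (each sampled from a $k$-wise independent distribution of the kind above), and bound $\ell(\rst{f}{(\sigma,\beta)})$ inductively level by level. At level $i$, I would partition the currently alive variables into blocks of size $h \vcentcolon= O(\log \ell / T) = O(\sqrt{\log \ell})$, apply \Cref{lemma:recursive-cc} with $H$ equal to the set of variables killed inside a block, and use the Chernoff-Hoeffding bounds \Cref{lemma:general_chernoff_bounded} and \Cref{lemma:small_chernoff_bounded} for $k$-wise independent variables to argue that, with probability at least $1 - \ell^{-c}/T$, the number of wires contributed by each block drops from its current value $\ell_i$ to at most $2^{h} \cdot (q\ell_i + h) \leq 2^{O(\sqrt{\log \ell})} \cdot q\ell_i$.

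\textbf{Putting the levels together.} Iterating the per-level shrinkage $T$ times gives
\[
\ell(\rst{f}{(\sigma,\beta)}) \;\leq\; \bigl(2^{O(\sqrt{\log \ell})}\bigr)^{T/\,T} \cdot q^{T}\cdot \ell \;\leq\; 2^{3\sqrt{c\log \ell}} \cdot p\ell,
\]
after absorbing the additive $h$ into the leading constant (which is fine because $h\ll q\ell_i$ while the recursion has not bottomed out, and the bottom-level contribution is $O(T\cdot h) = O(\log \ell)$). A union bound over the $T$ levels yields total failure probability at most $2\ell^{-c}$, matching the target. Local computability is inherited from the construction of each $\sigma_i$: a $\polylog(\ell)$-wise independent bit of $\sigma$ can be evaluated via $O(\log \ell)$-degree polynomial arithmetic over a field of size $\poly(n)$, which has a circuit of size $\polylog(\ell)$ as required.

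\textbf{Main obstacle.} The delicate step is the per-level accounting: I must choose the block size $h$, the independence parameter $k$, and the split exponent $T$ so that (i) $k$-wise independence is strong enough to apply \Cref{lemma:general_chernoff_bounded}/\Cref{lemma:small_chernoff_bounded} with failure probability $\ll \ell^{-c}/T$ in each of polynomially many blocks, (ii) the multiplicative loss $2^{h}$ per level compounds to only $2^{O(\sqrt{c\log \ell})}$ across the $T$ levels, and (iii) the additive terms $|H| = h$ at each level do not spoil the shrinkage once $\ell_i$ becomes comparable to $h$. This balancing is precisely what IMZ carry out for branching programs; the only genuinely new ingredient for comparator circuits is using \Cref{lemma:recursive-cc} as the structural inequality in place of the branching-program decomposition, and tracking wire count rather than node count.
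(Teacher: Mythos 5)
There is a genuine gap, and it is in the step you yourself flag as delicate. Your scheme pays a multiplicative factor $2^{h}$ with $h = \Theta(\sqrt{\log \ell})$ at \emph{each} of the $T = \Theta(\sqrt{c\log\ell})$ levels; compounding these losses gives $2^{hT} = 2^{\Theta(\log\ell)} = \ell^{\Theta(1)}$, which swamps the target factor $2^{3\sqrt{c\log\ell}}$. The expression $\bigl(2^{O(\sqrt{\log\ell})}\bigr)^{T/T}$ in your "putting the levels together" step is exactly where this is papered over: there is no reason the per-level losses should telescope rather than multiply. To make an iterated argument work you would need the per-level loss to be $2^{O(1)}$, and your per-level claim does not establish that. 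In fact the per-level claim itself is unsupported: you apply \Cref{lemma:recursive-cc} with $H$ equal to the variables \emph{killed} inside a block, but killed variables cost nothing (restricting a variable only removes wires); the lemma is needed for the opposite situation, namely to fix \emph{surviving} high-multiplicity variables at a cost of $2^{|H|}$. Moreover, to apply \Cref{lemma:general_chernoff_bounded} to the weighted sum of surviving wires you must control the maximum weight $m=\max_i w_i$; without first separating out the variables that label many wires, $m$ can be as large as $\Omega(\ell)$ and the bound $2k(m+\mu)$ gives no shrinkage at all. Your block partition does not substitute for this.

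The paper's proof avoids iteration entirely and is a single-step heavy/light argument (following IMZ's Lemma 5.3 for branching programs). Set $\alpha=\sqrt{c/\log\ell}$ and call a variable heavy if it labels at least $p^{1-\alpha}\ell$ wires, so there are at most $p^{-(1-\alpha)}$ heavy variables. With a $k$-wise independent $p$-regular selection, $k=c\log\ell$: (i) by \Cref{lemma:small_chernoff_bounded}, the probability that at least $h=\lceil \tfrac{3}{2}c/\alpha\rceil$ heavy variables survive is at most $(|H|p)^h \le p^{\alpha h}\le \ell^{-c}$; (ii) by \Cref{lemma:general_chernoff_bounded} applied to the light variables (where now $m< p^{1-\alpha}\ell$ and $\mu\le p\ell$), the surviving light wires exceed $4kp^{1-\alpha}\ell$ with probability at most $2^{-k}\le\ell^{-c}$. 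Finally, \Cref{lemma:recursive-cc} is used to fix the fewer than $h$ \emph{surviving heavy} variables at a multiplicative cost $2^{h+1}$, giving $\ell(\rst{f}{\rho})\le 2^{h+3}kp^{1-\alpha}\ell \le 2^{3\sqrt{c\log\ell}}\cdot p\ell$ outside an event of probability $2\ell^{-c}$; since the wire measure shrinks with exponent $1$, the quasipolynomial loss $p^{-\alpha}2^{h}=2^{O(\sqrt{c\log\ell})}$ is absorbed directly and no multi-level composition is needed. Your treatment of locality (importing the $k$-wise independent, locally computable selection from Lemma~18 of \cite{DBLP:journals/toct/CheraghchiKLM20}) matches the paper and is fine.
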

	\begin{proof}
		First, we note that a $k$-wise independent random selection
		that can be efficiently sampled and computed with the required
		parameters
		is proved to exist in Lemma~18
		of~\cite{DBLP:journals/toct/CheraghchiKLM20}.
		Henceforth, we let $\rho$ be the random restriction
		described by the pair $(\sigma, \beta)$.
		
		Let $C$ be a comparator circuit with $\ell$ wires computing $f$.
		Let $k = c \cdot \log \ell$.
		For $i \in [n]$, let $w_i$ be the number of wires in $C$ labelled
		with the variable $x_i$.
		
		Let $\alpha = \sqrt{c / \log \ell}$.
		We say that $i \in [n]$ is \emph{heavy} if
		$w_i \geq p^{1-\alpha}\cdot \ell$ and \emph{light} otherwise.
		Let $H \sseq [n]$ be the set of heavy variables.
		We have $\card{H} \leq (1/p)^{1-\alpha}$.
		Let also $H(\rho) := H \cap \rho^{-1}(*)$.
		Let
		$\rho'$ be a restriction such that
		$\rho'(x) = \rho(x)$ for $x \notin H(\rho)$
		and which sets the variables in $H(\rho)$ so as to maximize
		$\ell(\rst{f}{\rho'})$.
		By Lemma~\ref{lemma:recursive-cc}, we have
		$\ell(\rst{f}{\rho}) \leq 2^{\card{H(\rho)}+1} \cdot
		\ell(\rst{f}{\rho'})$.
		
		We now let $h = \ceil{3/2 \cdot c/\alpha}$,
		and observe that
		\begin{equation*}
		\Prob_{\rho}
		\left[  
		\ell(\rst{f}{\rho}) \geq 2^{h+3} k p^{1-\alpha}s
		\right]
		\leq
		\Prob_{\rho}
		\left[  
		\card{H(\rho)} \geq h
		\right]
		+
		\Prob_{\rho}
		\left[  
		\ell(\rst{f}{\rho'}) \geq 4kp^{1-\alpha} \ell
		\right].
		\end{equation*}
		Let $X_i$ be a random variable such that
		$X_i = 1$ iff $\rho(i)=*$.
		From Lemma~\ref{lemma:small_chernoff_bounded}, it follows that the
		first term can be bounded by $(\card{H}p)^{h} \leq p^{\alpha h} \leq \ell^{-c}$.
		For the second term, we can apply
		Lemma~\ref{lemma:general_chernoff_bounded} 
		on the light variables
		with $\mu \leq p \ell$
		and $m < p^{1-\alpha} \ell$, so that $m+\mu \leq 2p^{1-\alpha} \ell$,
		thus bounding the probability by $2^{-k} \leq \ell^{-c}$.
	\end{proof}

\end{document}